\def\ps@headings{%
	\def\@oddhead{\mbox{}\scriptsize\rightmark \hfil \thepage}%
	\def\@evenhead{\scriptsize\thepage \hfil \leftmark\mbox{}}%
	\def\@oddfoot{}%
	\def\@evenfoot{}}
\newcommand{\Lb}{\left[}
\newcommand{\Rb}{\right]}
\newcommand{\lb}{\left(}
\newcommand{\rb}{\right)}
\newcommand{\col}{\textnormal{col}}
\newcommand{\bone}{\mathbf{1}}
\newcommand{\bzero}{\mathbf{0}}
\newcommand{\bd}{\mathbf{d}}
\newcommand{\bx}{\mathbf{x}}
\newcommand{\by}{\mathbf{y}}
\newcommand{\br}{\mathbf{r}}
\newcommand{\ba}{\mathbf{a}}
\newcommand{\bw}{\mathbf{w}}
\newcommand{\bh}{\mathbf{h}}
\newcommand{\bv}{\mathbf{v}}
\newcommand{\bfa}{\mathbf{a}}
\newcommand{\bat}{\tilde{\mathbf{a}}}
\newcommand{\be}{\mathbf{e}}
\newcommand{\bu}{\mathbf{u}}
\newcommand{\byt}{\widetilde{\mathbf{y}}}
\newcommand{\bP}{\mathbf{P}}
\newcommand{\bB}{\mathbf{B}}
\newcommand{\bX}{\mathbf{X}}
\newcommand{\bH}{\mathbf{H}}
\newcommand{\bI}{\mathbf{I}}
\newcommand{\bJ}{\mathbf{J}}
\newcommand{\bA}{\mathbf{A}}
\newcommand{\bW}{\mathbf{W}}
\newcommand{\bF}{\mathbf{F}}
\newcommand{\bAt}{\widetilde{\mathbf{A}}}
\newcommand{\bBt}{\widetilde{\mathbf{B}}}
\newcommand{\bPt}{\widetilde{\mathbf{P}}}
\newcommand{\bJt}{\widetilde{\mathbf{J}}}
\newcommand{\cX}{\mathcal{X}}
\newcommand{\cR}{\mathcal{R}}
\newcommand{\cH}{\mathcal{H}}
\newcommand{\cV}{\mathcal{V}}
\newcommand{\cE}{\mathcal{E}}
\newcommand{\cT}{\mathcal{T}}
\newcommand{\TB}{\mathbb{T}}
\newcommand{\HB}{\mathbb{H}}
\newtheorem{theorem} {Theorem}
\newtheorem{lemma} {Lemma}
\newtheorem{corollary} {Corollary}
\newtheorem{rem} {Remark}
\begin{document}
	%
	\title{Enterprise Cyber Resiliency Against Lateral Movement: A Graph Theoretic Approach}

%
%

\author{\IEEEauthorblockN{Pin-Yu~Chen\IEEEauthorrefmark{1},
Sutanay~Choudhury\IEEEauthorrefmark{2},
Luke~Rodriguez\IEEEauthorrefmark{2}, 
Alfred~O.~Hero~III\IEEEauthorrefmark{3} and
Indrajit~Ray\IEEEauthorrefmark{4}}
\IEEEauthorblockA{\IEEEauthorrefmark{1}IBM Research.~ Email: pin-yu.chen@ibm.com}
\IEEEauthorblockA{\IEEEauthorrefmark{2}Pacific Northwest National Laboratory.~
Email: \{Sutanay.Choudhury,Luke.Rodriguez\}@pnnl.gov}
\IEEEauthorblockA{\IEEEauthorrefmark{3}Department of Electrical Engineering and Computer Science, University of Michigan, Ann Arbor.~
	Email: hero@umich.edu}
\IEEEauthorblockA{\IEEEauthorrefmark{4}Department of Computer Science, Colorado State University.~
				Email: indrajit@cs.colostate.edu}	
\thanks{This work was partially supported by the Consortium for Verification Technology under Department of Energy National Nuclear Security Administration award number DE-NA0002534 and by the 
		Asymmetric Resilient Cyber Security initiative at Pacific Northwest National Laboratory, which is operated by Battelle Memorial Institute.}	
			}

	\maketitle

	\begin{abstract}
  Lateral movement attacks are a serious threat to enterprise security. In these attacks, an attacker compromises a trusted user account to get a foothold into the enterprise network and uses it to attack other trusted users, increasingly gaining higher and higher privileges. Such lateral attacks are very hard to model because of the unwitting role that users play in the attack and even harder to detect and prevent because of their low and slow nature. In this paper, a theoretical framework is presented for modeling lateral movement attacks and for proposing a methodology for designing resilient cyber systems against such attacks. The enterprise is modeled as a tripartite graph capturing the interaction between users, machines, and applications, and a set of procedures is proposed to harden the network by increasing the cost of lateral movement. Strong theoretical guarantees on system resilience are established and experimentally validated for large enterprise networks.
	\end{abstract}


	\IEEEpeerreviewmaketitle

\section{Introduction}\label{sec:introduction}

%
%
%
%
		
Cyber security is one of the most critical problems of our
time. Notwithstanding the enormous strides that researchers and
practitioners have made in modeling, analyzing and mitigating cyber
attacks, black hats find newer and newer methods for launching
attacks requiring white hats to revisit the problem with a new
perspective.  One of the major
ways\footnote{http://www.verizonenterprise.com/DBIR} that attackers
launch an attack against an enterprise is by what is known as
``lateral movement via privilege escalation''. This attack cycle,
shown in Fig. \ref{fig:privilege-escalation}, begins with the
compromise of a single user account (not necessarily a privileged
one) in the targeted organization typically via phishing email,
spear phishing or other social engineering techniques.  From this
initial foothold and with time on his side, the attacker begins to
explore the network, possibly compromising other user accounts until
he gains access to a user account with administrative privileges to
the coveted resource: files containing intellectual property,
employee or customer databases or credentials to manage the network
itself.  Typically the attacker compromises multiple intermediate
user accounts, each granting him increasing privileges. Skilled
attackers frequently camouflage their lateral movements into the
normal network traffic making these attacks particularly difficult
to detect and insidious. 

Such lateral attacks are particularly insidious because authorized
users play the role of an unwitting accomplice. End users have often
been recognized as the ``weakest links'' in cyber security \cite{sasse2001transforming}. They do not follow security
advice and often take actions that compromise themselves as well as
others. While efforts to educate and train end users for cyber
security are important steps, anecdotal evidence shows that they have
not been as effective. Clearly, there is a need for designing large
enterprises that are resilient against such lateral movement
attacks. Our current work takes a step in this direction.

\begin{figure}[]
	\centering
	\includegraphics[scale=0.38]{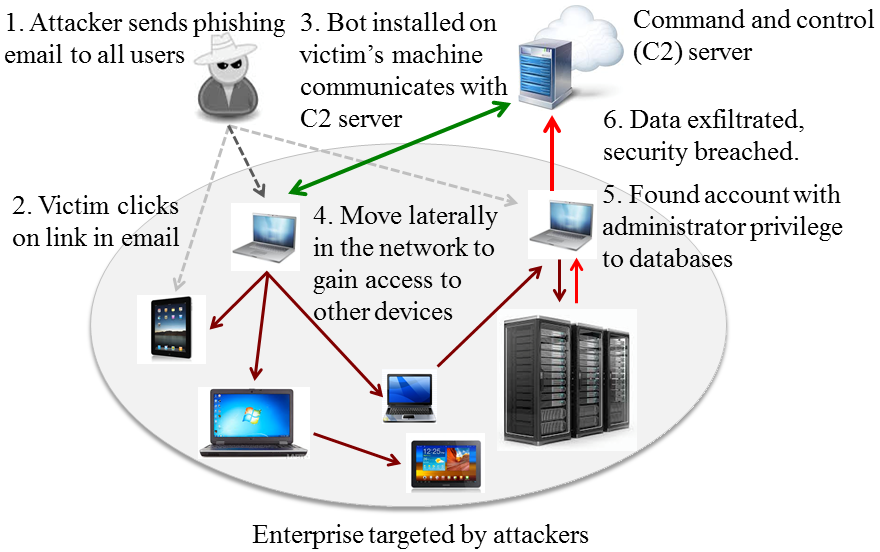}
			\vspace{-4mm}
	\caption{An illustration of a cyber attack using privilege escalation techniques.}
	\label{fig:privilege-escalation}
		\vspace{-6mm}
\end{figure}

\begin{figure*}[t!]
	\centering
	\begin{subfigure}[b]{0.23\textwidth}
		\includegraphics[width=\textwidth]{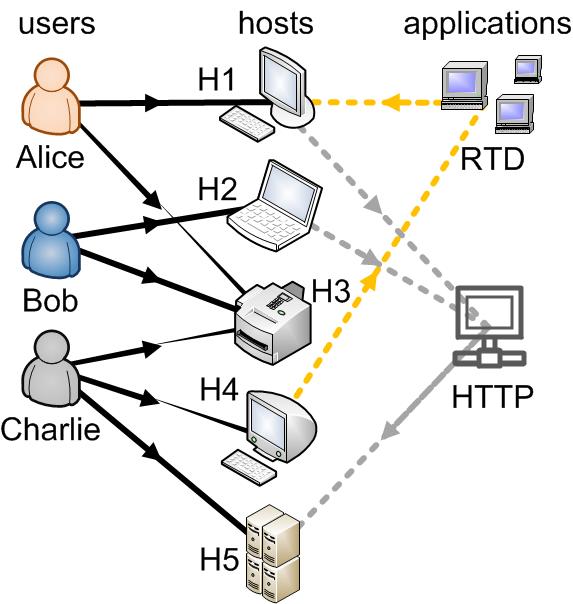}
		\caption{A tripartite network}
		\label{Fig_tripartite_demo}
	\end{subfigure}%
	\hspace{0.1cm}
	\centering
	\begin{subfigure}[b]{0.235\textwidth}
		\includegraphics[width=\textwidth]{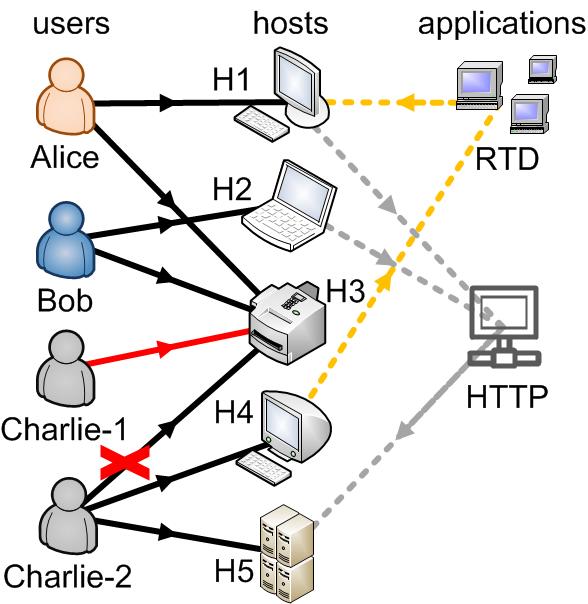}
		\caption{Segmentation}
		\label{Fig_demo_segment}
	\end{subfigure}
	\hspace{0.1cm}
	\centering
	\begin{subfigure}[b]{0.23\textwidth}
		\includegraphics[width=\textwidth]{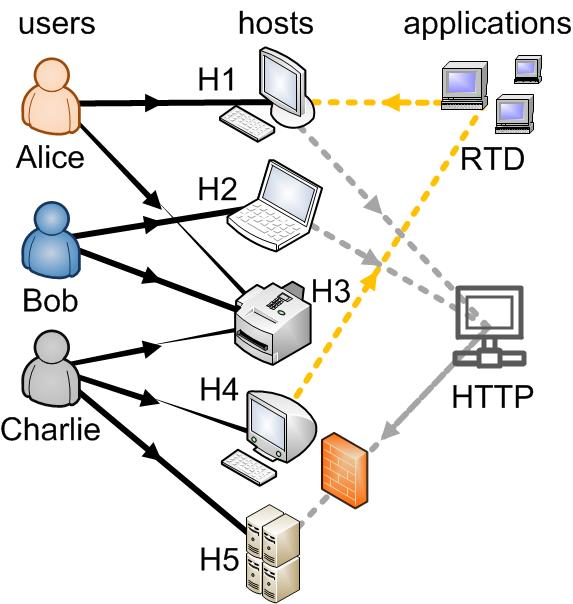}
		\caption{Edge hardening}
		\label{Fig_demo_edge_harden}
	\end{subfigure}
	\hspace{0.1cm}
	\centering
	\begin{subfigure}[b]{0.205\textwidth}
		\includegraphics[width=\textwidth]{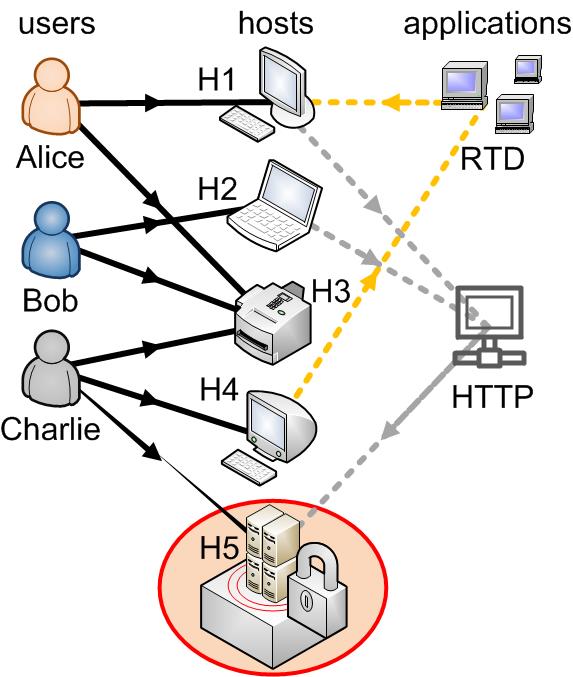}
		\caption{Node hardening}
		\label{Fig_demo_node_harden}
	\end{subfigure}
	\vspace{-4mm}	
	\caption{(a) Illustration of a tripartite network consisting of a set of users, a set of hosts and a set of applications.
		(b) Segmentation - the user Charlie modifies his access configuration by disabling the access of the existing account (Charlie-2) to host H3 
		and by creating a new user account (Charlie-1) for accessing H3 such that an attacker cannot reach the data
		server H5 though the printer H3 if Charlie-2 is compromised.
		(c) Edge hardening  via additional firewall rules on all network flows to H5 through HTTP.
		(d) Node hardening  via system update or security patch installation on H5. 
	}
	\label{Fig_tripartite}
		\vspace*{-6mm}
\end{figure*}


Resilient systems accept that not all attacks can be detected
and prevented; nonetheless, the system should be able to continue
operation even in the face of cyber attacks and provide its core
services or \textsl{missions} even if in a degraded manner
\cite{goldman2011cyber}. To build such a resilient system it is
important to be proactive in understanding and reasoning about
lateral movement in an enterprise network, its potential effects on
the organization, and identify ways to best defend against these
threats. Unfortunately, a theoretical framework for such risk
analysis is currently missing.  Our goal in this paper is to
establish the theoretical foundations of a systematic framework for
building networks resilient to lateral movement attacks.

We model lateral movement attack on an enterprise's mission as a
graph walk on a tripartite {\em user-host-application}
network that logically comprises of two subgraphs: a {\em user-host}
graph and a {\em host-application} graph. Fig. \ref{Fig_tripartite} illustrates the
model and our methodology. The user-host-application paradigm entails richer information than single (homogeneous) graph models (e.g., host-host communication networks). For instance, the host-host communication network can be derived from the host-application subgraph. The user-host-application paradigm also allows us to develop an abstraction of a mission in terms of
concrete entities whose behavior can be monitored and
controlled, which  captures interactions between diverse categories of users,
	software and hardware resources (e.g., virtual machines,
	workstations, mobile devices) and applications.


\begin{table}[!t]
	\centering
	\caption{Utility of the proposed algorithms and established theoretical results.}
	\label{table_unitility_lateral}
	\begin{tabular}{c|c|c}
		\hline
		System Heterogeneity   &  Hardening Methods    & Theoretical Guarantees                      \\ \hline
		User-Host                        & Algorithms \ref{algo_greedy_seg_score}, \ref{algo_greedy_seg_user_host}& Theorem \ref{thm_bound_f}, Corollary \ref{cor_greedy_seg} \\ \hline
		Host-Application             & Algorithms \ref{algo_greedy_edge_harden_score}, \ref{algo_greedy_node_harden} & Theorem \ref{thm_hardening}, Corollary \ref{cor_hardening}\\ \hline
		User-Host-Application          & all of the above      & all of the above                          \\ \hline
	\end{tabular}
		\vspace*{-2mm}
\end{table}

Defining lateral movements as graph walks allows us to determine which nodes in the tripartite graph can
be {\em reached} starting at a given node. From an attacker's perspective,
these nodes that can be ``reached'' are exactly those mission
components that can be attacked and compromised via exploits. The larger the
number of nodes that can be reached by the attacker, the more
``damage'' he/she can cause to the mission. Given a system snapshot and a compromised workstation or
mobile device, we can define the ``Attacker's Reachability'' as a
measure that estimates the number of hosts at risk through a given
number of system exploits. Now, from a defender's
perspective, putting some defensive control on one of these nodes (or edges) allows the walk to be broken at that
point. Intuitively, such a walk can also be used to identify
mission hardening strategies that reduce risk.
This central idea is illustrated in Fig. \ref{Fig_tripartite}.
The heterogeneity of a cyber system entails a network of networks (NoN) representation of entities in the system as displayed in Fig. \ref{Fig_tripartite}, allowing us to devise effective hardening strategies from different perspectives, which differs from works focusing on manipulating the network topology under the assumption that the graph is homogeneous, that is, all nodes have an identical role in a cyber system. 

As our model considers the heterogeneity of a cyber system and incorporates several defensive actions for enhancing the resilience to lateral movement attacks, to assist reading the utility of the proposed approaches and the established theoretical results are summarized in Table \ref{table_unitility_lateral}, and the proofs of the established mathematical results are placed in the appendices in the supplementary file\footnote{Supplementary material: https://goo.gl/h8XHZX}.

The research contributions of this paper are listed as follows.
\begin{enumerate}
	\item By modeling lateral movements as graph walks on a
	user-host-application tripartite graph, we can specify the dominant factors 
	affecting attacker's reachability (Sec. \ref{sec_cascading}), setting the stage for proposing  greedy hardening and segmentation algorithms for network configuration change recommendation to reduce the attacker's reachability  (Sec. \ref{sec_segmentation} and Sec. \ref{sec_hardening}). 
	\item We characterize the effectiveness of  three types of defensive actions against lateral
	movement attacks, each of which can be abstracted via a node or edge operation
	on the tripartite graph, which are 
	(a) segmentation in user-host graph (Sec. \ref{sec_segmentation}),
	(b) edge hardening in host-application graph (Sec. \ref{sec_hardening}), and 
	(c) node hardening in host-application graph (Sec. \ref{sec_hardening}).
	
	\item We provide quantifiable guarantees (e.g.,  submodularity) on the performance loss
	of the proposed greedy algorithms
	relative to the optimal batch
	algorithms that have combinatorial computation complexity (Theorem \ref{thm_bound_f} and Theorem \ref{thm_hardening}). 
	
	\item  We apply our algorithms to a collected
	real tripartite network dataset  and demonstrate that the proposed 
	approaches can significantly constrain attacker's reachability and
	hence provide effective configuration recommendations to secure the
	system (Sec. \ref{sec_expetiment_tripartite}).
	
	\item  We collect traces of real lateral movement attacks in a cyber system  for performance evaluation (Sec. \ref{sec_benchmark}). 
	 We benchmark our approach against the NetMelt algorithm \cite{tong2012gelling_short} and show that our approach can achieve the same reduction in attacker's reachability by hardening nearly 1/3 of the resources as recommended by NetMelt.
\end{enumerate}

\section{Background and Related Work}
Laterally moving through a cyber network looking to obtain access to administrator's credentials or confidential information is a common technique in an attacker's toolbox \cite{provos2003preventing}. Particularly,  privilege escalation through lateral movement 
is a critical challenge for the security community \cite{bugiel2012towards, xing2014upgrading,CPY16_bio_CommMag}. 		
For anomaly detection the authors in \cite{das2010baaz} employ graph clustering to group activities with similar behavioral pattern and make change recommendations
when the access control methods in place deviate from the real-world activity patterns. 
The authors in \cite{chen2012detecting} use community structure to detect anomalous insiders in collaborative information systems. 
 For attack prevention the authors in \cite{zheng2011active} use a graph partitioning approach to fragment the network to limit the possibilities of lateral movement. 
For risk assessment the authors in \cite{Cheng11,CPY14control} use epidemic models for modeling and controlling malware propagation.

Our work fits into two emerging areas of study, 1) \textsl{Network of networks} (NoN) representing multiple inter-related networks as a single model, and 2) studies on resilience of networks. Recently NoN has been an active area of research with diverse topics such as cascading analysis and control in interdependent networks \cite{gao2011robustness, chapman2014controllability}, improved grouping or ranking of entities in a network \cite{ni2014inside}, and mapping of domain problems into the NoN paradigm \cite{halappanavar2013towards}.  Network resilience is a long studied topic \cite{demeester1999resilience}, primarily focusing on the physical topology of communication networks.  There has been a surge in focus on enterprise-level cyber resilience  \cite{goldman2011cyber, choudhury:SafeConfig2015}, where the entire enterprise structure is modeled as a NoN. 

Recently researches have focused on altering the network structure to improve its resilience, as measured in terms of the spectral properties \cite{chan2014make,CPY14ComMag_short}.  Preventing contagion in networks is another attribute for resilience, and approaches such as \cite{prakash2013fractional} suggest algorithms that \textsl{immunize} a subset of nodes as a preventive measure.  
We contribute to this research area by unifying multiple data sources (e.g., different perspectives of user behaviors) into a single model.  Integration of multiple data sources such as user access control and application traffic over the network makes the model more comprehensive and resulting recommendations more profound \cite{hu2014leak}.   
This paper is tailored to providing action recommendations for enhancing the resilience of a heterogeneous cyber system based on the associated NoN representation, which differs from previous works that focus on manipulating the topology of a simple (homogeneous) network where each node in the graph has an identical role \cite{tong2012gelling_short,le2015met_short}.
To the best of our knowledge, this paper proposes the first representation of a cyber system using the NoN model
for designing algorithms that improve resiliency against lateral movement attacks.

\section{Network Model and Iterative Reachability Computation of Lateral Movement}
\label{sec_cascading}

\subsection{Notation and Tripartite Graph Model} 

Throughout this paper a scripted uppercase letter (e.g., $\cX$) denotes a set, a boldfaced uppercase letter (e.g., $\bX$ or $\bX_k$) denotes a matrix, and its entry in the $i$-th row and the $j$-column is denoted by $[\cdot]_{ij}$, a boldfaced lowercase letter (e.g., $\bx$ or $\bx_k$) denotes a column vector, and its $i$-th entry is denoted by $[\cdot]_{i}$, and a plain uppercase or lowercase letter (e.g., $X$ or $x$) denotes a scalar unless specified. 
The expression $|\cX|$ denotes the number of elements in the set $\cX$.
The expression $\mathtt{e}$ denotes the Euler's number, i.e., the base of the natural logarithm.
The expression $\be_i^x$ denotes the $x \times 1$ canonical vector of zero entries except its $i$-th entry is $1$.
The expression $\bI_n$ denotes the $n \times n$ identify matrix. The expression $\bone_n$ denotes the $n \times 1$ column vector of ones.
The expression $\col_x(\bX)$ denotes the $x$-th column of $\bX$.
The expression $\lambda_{\max}(\bX)$ denotes the largest eigenvalue (in magnitude) of a square matrix $\bX$.
The operation $\cdot^T$ denotes matrix or vector transpose.
The operation $\otimes$ denotes the Kronecker product which is defined in Appendix \ref{subsec_Kronecker}.
The operation $\odot$ denotes the Hadamard (entry-wise) product of matrices. 
The operator $\TB: \mathbb{R}_+^n \mapsto [0,1]^n$ is a threshold function such that $[\TB(\bx)]_i=[\bx]_i$ if $0 \leq [\bx]_i \leq 1$ and $[\TB(\bx)]_i=1$ if $[\bx]_i>1$.
The operator $\HB_{\ba}:[0,1]^n \mapsto \{0,1\}^n$ is an entry-wise indicator function such that $[\HB_{\ba}(\bx)]_i=1$ if $[\bx]_i > [\ba]_i $, and  $[\HB_{\ba}(\bx)]_i=0$ otherwise. 
The tripartite graph in Fig. \ref{Fig_tripartite} can be characterized by a set of users 
$\cV_{user}$, a set of hosts $\cV_{host}$, a set of applications $\cV_{app}$, a set of user-host accesses $\cE \subset \cV_{user} \times \cV_{host}$, and a set of host-application-host activities $\cT \subset \cV_{host} \times \cV_{app} \times \cV_{host}$. 
The cardinality of $\cV_{user}$, $\cV_{host}$ and $\cV_{app}$ are denoted by $U$, $N$ and $K$, respectively. The main notation and symbols are listed in Table \ref{table_enterprise_notation}.

\begin{table}[t!]
	\caption{List of main notation and symbols.}		
	\label{table_enterprise_notation}	
	\centering
	\begin{tabular}{|c|l|}
		\hline
		$U / N / K$ & number of users / hosts / applications		
		\\ \hline
		$\lambda_{\max}(\bX)$ & largest eigenvalue of matrix $\bX$
		\\ \hline
		$\otimes$ & Kronecker product
		\\ \hline
		$\bone_n$ & $n \times 1$ column vector of ones
		\\ \hline
		$\bA_C$ & User-host graph matrix
		\\ \hline
		$\bA$ & Host-application graph matrix
		\\ \hline
		$\bP$ & Compromise probability matrix
		\\ \hline
		$\bB$ & $\bA_C^T\bA_C$
		\\ \hline
		$\bJ$ & $\lb \bP \otimes \bone_N \rb^T \bA^T$
		\\ \hline
		$\br$ / $\ba$ & Reachability / Hardening level vector
		\\ \hline
		$\TB(\bx)$ & Threshold function on vector $\bx$
		\\ \hline
		$\HB_{\ba}(\bx)$ & Comparator function of $\bx$ and $\ba$
		\\ \hline
	\end{tabular}	 
	\vspace{-2mm} 
\end{table}

\subsection{Reachability of Lateral Movement on User-Host  Graph}
\label{subsec_reachability_user_host}
Let $G_C=(\cV_{user},\cV_{host},\cE)$ with $\cE \subset \cV_{user} \times \cV_{host}$ denoting the user-host bipartite graph. The access privileges between users and hosts are represented by a binary $U \times N$ adjacency matrix $\bA_C$, where $[\bA_C]_{ij}=1$ if user $i$ can access host $j$, and  $[\bA_C]_{ij}=0$ otherwise.
Let $\br_0$ be an $N \times 1$ binary vector indicating the initial host compromise status, where $[\br_0]_j=1$ if host $j$ is initially being compromised, and  $[\br_0]_j=0$ otherwise. Given $\br_0$, we are interested in computing the final binary compromise vector $\br_\infty$ when attackers leverage user access privileges to compromise other accessible hosts.  The vector $\br_\infty$ specifies the reachability of a lateral movement attack, where reachability is defined as the fraction of hosts that can be reached via graph walks on $G_C$ starting from $\br_0$.
Therefore, reachability is used as a quantitative measure of network vulnerability to lateral movement attacks. Furthermore, studying $\br_\infty$ allows us to investigate the dominant factor that leads to high reachability and more efficient countermeasures.

The computation of $\br_\infty$ can be viewed as a cascading process of repetitive walks on $G_C$ starting from a set of  compromised hosts. Let $\br_t$ denote the binary compromise vector after $t$-hop walks and let $\bw_h$ be the number of $h$-hop walks starting from $\br_0$ and $\bw_0=\br_0$. The hop count of a walk between two hosts in $G_C$ is defined as the number of traversed users.
We begin by computing $\br_1$ from $\br_0$:  the number of 
$1$-hop walk from $\br_0$ to host $j$ is $[\bw_{1}]_j=\sum_{i=1}^U \sum_{k=1}^N [\bA_C]_{ij} [\bA_C]_{ik} [\br_0]_k 
={\be_j^N}^T \bA_C^T\bA_C \br_0.$
Let $\bB=\bA_C^T\bA_C$, an induced adjacency matrix of hosts in $G_C$, 
where  $[\bB]_{ij}$ is the number of common users that can access hosts $i$ and $j$.
Then we have $\bw_1=\bB \br_0$ and $\br_1=\TB (\bw_1)$. Generalizing this result, we have
\begin{align}
\label{eqn_cascade_user_host_1}
\bw_{h+1}&=\bB \bw_h=\bB^{h+1} \br_0;  \\
\label{eqn_cascade_user_host_2}
\br_{t+1}&=\TB \lb \sum_{h=1}^{t+1} \bw_{h} \rb.
\end{align}
The term in (\ref{eqn_cascade_user_host_2}) accounts for the accumulation of compromised hosts up to $t+1$ hops. Note that based on the property of $\TB$, (\ref{eqn_cascade_user_host_2}) can be simplified as 
\begin{align}
\label{eqn_cascade_user_host_3_short}
\br_{t+1}=
\TB \lb \br_t + \bB \br_t \rb. ~~~~\text{(Appendix \ref{proof_eqn_cascade_user_host_3})}
\end{align}
The recursive relation of reachability in (\ref{eqn_cascade_user_host_3_short}) suggests that the term $\bB$ is the dominant factor affecting the propagation of lateral movement. Moreover, from  (\ref{eqn_cascade_user_host_3_short})  we obtain an efficient iterative algorithm for computing $\br_\infty$ that involves successive matrix-vector multiplications until $\br_t$ converges. 



\subsection{Reachability of Lateral Movement on Host-Application  Graph}
\label{subsec_reachability_host_app}
The host-application graph contains the information of host-host communicating through an application. Let $\bA_k$ be an $N \times N$ binary matrix representing the host-to-host communication through application $k$, where 
$[\bA_k]_{ij}=1$ means host $i$ communicates with $j$ through application $k$; and $[\bA_k]_{ij}=0$ otherwise. The $N \times KN$ binary matrix 
$\bA=[\bA_1~\bA_2~\cdots~\bA_K]$ is the concatenated matrix of $K$ host-application-host matrices $\bA_k$ for $k=1,2,\ldots,K$. Let $\bP$ denote the compromise probability matrix, which is a $K \times N$ matrix where its entry $[\bP]_{kj}$ specifies the probability of compromising host $j$ through application $k$. In addition, each host is assigned with a hardening value $[\ba]_j \in [0,1]$ indicating its security level. 

Similar to Sec. \ref{subsec_reachability_user_host}, we are interested in computing the reachability of lateral movement on the host-application  graph. The hop count of a walk between two hosts in the  host-application  graph is defined as the average number of paths between the two hosts through  applications. Let $\bW$ be an $N \times N$ matrix where
$[\bW]_{ij}$ is the average number of one-hop walk from host $i$ to host $j$. Then we have
$[\bW]_{ij}=\sum_{k=1}^K [\bA_k]_{ij} \bP_{kj}$.
Let $\bw_h$ be an $N \times 1$ vector representing the average number of $h$-hop walks of hosts and $\bw_0=\br_0$. 
Then the $j$-th entry of the $1$-hop vector $\bw_1$ is
\begin{align}
\label{eqn_propagation_1_short}
[\bw_1]_j  = \be_j^T \Lb  \col_j(\bP)^T \otimes \bI_n \Rb \bA^T \br_0.~~(\text{Appendix \ref{proof_eqn_propagation_1}})
\end{align} 
Stacking (\ref{eqn_propagation_1_short}) as a column vector gives 
\begin{align}
\label{eqn_propagation_2_short}
\bw_1
=\lb \bP \otimes \bone_N \rb^T \bA^T  \br_0.~~~~(\text{Appendix \ref{proof_eqn_propagation_2}})
\end{align}



The $1$-hop compromise vector $\br_1$ is defined as $\br_1=\HB_{\ba} \lb \TB \lb \bw_1 \rb \rb$. In effect the operator $\HB_{\ba}$ compares the thresholded average number of walks with the hardening level for each host, which means a host $j$ can be compromised only when the thresholded average number of $1$-hop walk $[\TB \lb \bw_1 \rb]_j$ is greater than its hardening level $[\ba]_j$.
Generalizing this result to $h$-hop, we have
\begin{align}
\label{eqn_casecade_host_app_1}
\bw_{h+1}&=\lb \bP \otimes \bone_N \rb^T \bA^T \bw_h;  \\
\label{eqn_casecade_host_app_2}
\br_{t+1}&=\HB_{\ba} \lb \TB \lb \sum_{h=1}^{t+1} \bw_h \rb \rb.
\end{align}
The term in (\ref{eqn_casecade_host_app_2}) has an equivalent expression
\begin{align}
\label{eqn_casecade_host_app_3_short}
\br_{t+1}=\HB_{\ba} \lb \TB \lb \br_t + \lb \bP \otimes \bone_N \rb^T \bA^T \br_t \rb  \rb,
\end{align}
which is proved in Appendix \ref{proof_eqn_casecade_host_app_3}.
Consequently,  for lateral movement on the host-application graph the matrix  $\bJ = \lb \bP \otimes \bone_N \rb^T \bA^T$ is the dominant factor, and (\ref{eqn_casecade_host_app_3_short}) leads to an iterative algorithm for reachability computation.


\subsection{Reachability of Lateral Movement on Tripartite User-Host-Application Graph}

Utilizing the developed results in Sec. \ref{subsec_reachability_user_host} and Sec. \ref{subsec_reachability_host_app},
the cascading process of lateral movement on the tripartite user-host-application graph
can be modeled by
\begin{align}
\br_{t+1}
&\equiv	\HB_{\ba} \lb \TB \lb \br_t +\Lb \bB+ \lb \bP \otimes \bone_N \rb^T \bA^T \br_t \rb  \Rb \rb.  \nonumber
\end{align}


\section{Segmentation on User-Host Graph}
\label{sec_segmentation}
In this section we investigate segmentation on user-host graphs as a countermeasure for suppressing lateral movement. Segmentation works by creating new user accounts to separate user from host in order to reduce the reachability of lateral movement, as illustrated in Fig. \ref{Fig_tripartite} (b). In principle, segmentation removes some edges from the access graph $G_C$ and then merge these removed edges to create new user accounts. Therefore, segmentation retains the same access functionality and constrains lateral movement attacks at the price of additional user accounts. The following analysis provides a theoretical framework of different segmentation strategies.

Recall from (\ref{eqn_cascade_user_host_3_short}) that the matrix $\bB$ is the key factor affecting the reachability of lateral movement on $G_C$. 
Therefore, an effective edge removal approach for segmentation is reducing the spectral radius  of $\bB$ (i.e., $\lambda_{\max}(\bB)$)
by removing some edges from $G_C$. 
Note that by definition $\bB=\bA_C^T \bA_C$ so that $\bB$ is a positive semidefinite (PSD) matrix, and all entries of $\bB$ are nonnegative.
Therefore, by the Perron-Frobenious theorem \cite{HornMatrixAnalysis} the entries of $\bB$'s largest eigenvector $\bu$ (i.e., the eigenvector such that $\bB \bu=\lambda_{\max}(\bB) \bu$) 
are nonnegative.

Here we investigate the change in $\lambda_{\max}(\bB)$ when an edge is removed from $G_C$ in order to define an edge score function that is associated with spectral radius reduction of $\bB$.
If an edge ($i,j) \in \cE$ is removed from $G_C$, then the resulting adjacency matrix of $G_C \setminus (i,j)$ is 
$\bAt_C \lb (i,j) \rb=\bA_C-\be^U_i {\be^N_j}^T$.
The corresponding induced adjacency matrix is 
\begin{align}
\label{eqn_B_one_edge_removal}
\bBt \lb (i,j) \rb &=\bAt_C \lb (i,j) \rb^T \bAt_C \lb (i,j) \rb \nonumber \\
&=\bB-\bA_C^T \be^U_i {\be^N_j}^T - \be^N_j {\be^U_i}^T \bA_C + {\be^N_j} {\be^N_j}^T. 
\end{align}
By the Courant-Fischer theorem \cite{HornMatrixAnalysis} we have 
\begin{align}
\label{eqn_segment_spectral_radius_LB}
\lambda_{\max} \lb \bBt \lb (i,j) \rb \rb  &\geq  \bu^T \bBt \lb (i,j) \rb \bu  \\
&=\lambda_{\max}(\bB) - 2 \bu^T \bA_C^T \be^U_i [\bu]_j + [\bu]_j^2.  \nonumber
\end{align}
The relation in (\ref{eqn_segment_spectral_radius_LB}) leads to a greedy removal strategy that finds the edge $(i,j) \in \cE$ that maximizes the edge score function $ 2 \bu^T \bA_C^T \be^U_i [\bu]_j - [\bu]_j^2$, in order to minimize a lower bound on the spectral radius of $\bBt \lb (i,j) \rb$.
Moreover, Lemma \ref{lem_bound_f} below shows that the edge score function is also associated with an upper bound on the spectral radius of $ \bBt \lb (i,j) \rb$.
Following similar methodology, when a subset of edges $\cE_{\cR} \subset \cE$ are removed from $G_C$, we have 
\begin{align}
\label{eqn_eqn_B_removed_LB_short}
\lambda_{\max} \lb \bBt \lb \cE_{\cR} \rb \rb \geq  
\lambda_{\max}(\bB) - f(\cE_{\cR}),~\text{(Appendix \ref{proof_eqn_B_removed})}
\end{align}
where the function 
\begin{align}
\label{eqn_f_enterprise}
f(\cE_{\cR})&=2 \sum_{(i,j) \in \cE_{\cR}} \bu^T \bA_C^T \be^U_i [\bu]_j \\ \nonumber
&~~~-    \sum_{i \in \cV_{user}} \sum_{j \in \cV_{host}, (i,j) \in \cE_{\cR}} \sum_{s \in \cV_{host}, (i,s) \in \cE_{\cR}}    [\bu]_j [\bu]_s.
\end{align}



In a nutshell, the function $f(\cE_{\cR})$ provides a score that evaluates the effect of edge removal set $\cE_{\cR}$ on the spectral radius of $\bBt \lb \cE_{\cR} \rb$.
The lemma presented in Appendix \ref{proof_lem_f_edge} shows $f(\cE_{\cR})$ is nonnegative as it can be represented as a sum of nonnegative terms.
The following lemma shows that $f(\cE_{\cR})$ is associated with an upper bound on the spectral radius of $\bBt \lb \cE_{\cR} \rb$. Therefore, maximizing $f(\cE_{\cR})$ can
be an effective strategy for spectral radius reduction of $\bB$.
\begin{lemma}
	\label{lem_bound_f}
	For any edge removal set $\cE_{\cR}$ with $|\cE_{\cR}|=q \geq 1$,  if there exits one edge removal set $\cE_{\cR} \subset \cE$ 
	such that $f(\cE_{\cR})>0$, then there exists 
	some constant $c>0$ such that 
	\begin{align}
	&\lambda_{\max}(\bB) - c \cdot f(\cE_{\cR}) \geq	\lambda_{\max} \lb \bBt \lb \cE_{\cR} \rb \rb.  
	\end{align}
\end{lemma}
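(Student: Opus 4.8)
The plan is to recognize the score $f(\cE_{\cR})$ as a Rayleigh-type quantity attached to the leading eigenvector $\bu$ of $\bB$, and then to convert strict positivity of $f$ into a strict drop of the spectral radius that can be made uniform over the (finite) family of edge-removal sets.

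First I would record the algebraic identity behind the lower bound (\ref{eqn_eqn_B_removed_LB_short}): expanding $\bBt(\cE_{\cR}) = \bAt_C(\cE_{\cR})^T \bAt_C(\cE_{\cR})$ exactly as in (\ref{eqn_B_one_edge_removal}) and pairing it against $\bu$ gives $f(\cE_{\cR}) = \bu^T(\bB - \bBt(\cE_{\cR}))\bu$. Because removing edges only turns $1$'s into $0$'s in $\bA_C$, the matrix $\bB - \bBt(\cE_{\cR})$ is entrywise nonnegative; together with $\bu \geq \bzero$ (Perron--Frobenius) this re-proves $f(\cE_{\cR}) \geq 0$, and, since $\bzero \leq \bBt(\cE_{\cR}) \leq \bB$ entrywise, Perron--Frobenius monotonicity yields $\lambda_{\max}(\bBt(\cE_{\cR})) \leq \lambda_{\max}(\bB)$. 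This already settles every $\cE_{\cR}$ with $f(\cE_{\cR})=0$, for which the claimed inequality holds with any $c>0$.

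The substantive step is to show a strict drop whenever $f(\cE_{\cR})>0$. Let $\bv \geq \bzero$ be a unit leading eigenvector of $\bBt(\cE_{\cR})$. Then $\lambda_{\max}(\bB) \geq \bv^T \bB \bv = \bv^T \bBt(\cE_{\cR})\bv + \bv^T(\bB-\bBt(\cE_{\cR}))\bv = \lambda_{\max}(\bBt(\cE_{\cR})) + \bv^T(\bB-\bBt(\cE_{\cR}))\bv$. If the spectral radius did not strictly decrease, the last term would vanish, forcing $\bv$ to be a leading eigenvector of $\bB$ as well and $(\bB-\bBt(\cE_{\cR}))\bv = \bzero$. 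Invoking that the user-host graph is connected, so that $\bB$ is irreducible and its Perron eigenvector is unique and strictly positive, this identifies $\bv$ with $\bu$ and gives $f(\cE_{\cR}) = \bu^T(\bB-\bBt(\cE_{\cR}))\bu = \bv^T(\bB-\bBt(\cE_{\cR}))\bv = 0$, contradicting $f(\cE_{\cR})>0$. Hence $\lambda_{\max}(\bB) - \lambda_{\max}(\bBt(\cE_{\cR})) > 0$.

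Finally I would make the constant uniform. Since $\cE$ is finite, there are only finitely many removal sets, so I can set $c = \min_{\cE_{\cR}:\, f(\cE_{\cR})>0} \frac{\lambda_{\max}(\bB)-\lambda_{\max}(\bBt(\cE_{\cR}))}{f(\cE_{\cR})}$, a minimum of finitely many strictly positive numbers by the previous step, hence $c>0$; by construction $\lambda_{\max}(\bB) - c\, f(\cE_{\cR}) \geq \lambda_{\max}(\bBt(\cE_{\cR}))$ for all sets with $f>0$, and the bound is automatic when $f=0$. The main obstacle is the strict-decrease step: the estimate $\lambda_{\max}(\bB)-\lambda_{\max}(\bBt(\cE_{\cR})) \geq \bv^T(\bB-\bBt(\cE_{\cR}))\bv$ can be tight and the quadratic form at $\bv$ need not dominate $f$ pointwise, so the argument must route through the Perron--Frobenius uniqueness of the leading eigenvector. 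Without irreducibility (e.g., a host graph with two equally dominant components, only one of which is cut) $f$ can be positive while $\lambda_{\max}$ is unchanged, so connectivity of $\bB$ is exactly the hypothesis that rescues the claim.
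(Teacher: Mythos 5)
Your proposal is correct, modulo the connectivity assumption you make explicit, and it completes the proof by a genuinely different route than the paper. Both arguments share the same opening inequality: evaluating the Rayleigh quotient of $\bBt\lb\cE_{\cR}\rb$ at its own Perron eigenvector $\bv$ gives $\lambda_{\max}\lb\bBt\lb\cE_{\cR}\rb\rb = \bv^T\bB\bv - g(\cE_{\cR}) \leq \lambda_{\max}(\bB) - g(\cE_{\cR})$ with $g(\cE_{\cR}) = \bv^T\lb\bB-\bBt\lb\cE_{\cR}\rb\rb\bv$, which is exactly the paper's (\ref{eqn_bound_f_1}). The divergence is in how positivity of the drop is obtained. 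The paper simply asserts that $f(\cE_{\cR})>0$ implies the existence of $c>0$ with $g(\cE_{\cR}) \geq c\,f(\cE_{\cR})$; the only supporting argument given is that $g$ cannot vanish on \emph{every} removal set of size $q$, which does not show $g(\cE_{\cR})>0$ for the \emph{particular} set in the statement, so the paper's final step is justified only when $g(\cE_{\cR})>0$ (in which case $c=g/f$ trivially works). You instead (i) invoke entrywise monotonicity $\bzero \leq \bBt\lb\cE_{\cR}\rb \leq \bB$---never used by the paper---to get the weak inequality outright, (ii) turn $f(\cE_{\cR})>0$ into a \emph{strict} spectral drop through the equality case of the Rayleigh bound plus uniqueness of the Perron vector of an irreducible $\bB$, and (iii) use finiteness of $\cE$ to produce a single constant $c$ valid for all removal sets simultaneously, which is stronger than the statement demands. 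What each buys: the paper's proof needs no hypothesis beyond the lemma's wording, but it leaves the $g \geq c\,f$ step unsubstantiated in exactly the degenerate case; yours is airtight but requires irreducibility of $\bB$ (equivalently, connectivity of the user-host graph), an assumption the paper never states. Your closing counterexample---two equally dominant disconnected blocks, only one of which is cut, so that $f(\cE_{\cR})>0$ while $\lambda_{\max}$ does not move and no $c>0$ can exist---shows that the lemma as literally stated fails without such a hypothesis, and hence that the weakness in the paper's own proof is real rather than cosmetic.
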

\begin{proof}
	The proof can be found in Appendix  \ref{proof_lem_bound_f}.
\end{proof}

Moreover, the lemma presented in Appendix \ref{proof_lem_monotone_f} shows that  $f(\cE_{\cR})$ is a monotonic increasing set function, which means that for any two subsets  $\cE_{\cR1},\cE_{\cR2} \subset \cE$ satisfying $\cE_{\cR1} \subset \cE_{\cR2}$,
$f(\cE_{\cR2}) \geq f(\cE_{\cR1})$.
In addition,
the following theorem  shows that  $f(\cE_{\cR})$ is a monotone submodular set function \cite{Fujishige90}, which establishes performance guarantee of greedy edge removal on reducing the spectral radius of $\bBt \lb \cE_{\cR} \rb$.
Submodularity means $f(\cE_{\cR})$ has diminishing gain: for any $\cE_{\cR1} \subset \cE_{\cR2} \subset \cE$ and 
$ e \in \cE \setminus \cE_{\cR2}$,
the discrete derivative $\Delta f(e|\cE_{\cR})=f(\cE_{\cR} \cup e)-f(\cE_{\cR})$ satisfies $\Delta f(e|\cE_{\cR2})\leq \Delta f(e|\cE_{\cR1})$.



\begin{theorem}
	\label{thm_submodularity_edge}
	$f(\cE_{\cR})$ is a monotone submodular set function.
\end{theorem}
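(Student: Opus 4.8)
The plan is to lean on the monotonicity already established (the lemma recorded in the appendix and cited just before the theorem), so that only submodularity needs a genuine argument. First I would split $f$ into a modular part and a quadratic penalty. Recognizing the triple sum as a perfect square over each user, write
\[
f(\cE_{\cR}) = \underbrace{2\sum_{(i,j)\in\cE_{\cR}} \bu^T \bA_C^T \be^U_i [\bu]_j}_{\text{linear in }\cE_{\cR}} \;-\; g(\cE_{\cR}),
\qquad
g(\cE_{\cR}) = \sum_{i\in\cV_{user}} \left( \sum_{j:\,(i,j)\in\cE_{\cR}} [\bu]_j \right)^{2}.
\]
The first term is modular (additive over edges), so it contributes the same amount to every discrete derivative and cancels in the submodularity inequality; the whole question thus reduces to the behavior of $-g$.

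Next I would compute the discrete derivative for adding a single edge $e=(a,b)\in\cE\setminus\cE_{\cR}$. Since $e$ touches only user $a$, only the $i=a$ summand of $g$ changes. Writing $S_a(\cE_{\cR})=\{\,j:(a,j)\in\cE_{\cR}\,\}$ and using $b\notin S_a(\cE_{\cR})$ (which holds because $e\notin\cE_{\cR}$), expanding the square gives
\[
g(\cE_{\cR}\cup e)-g(\cE_{\cR}) = 2[\bu]_b \sum_{j\in S_a(\cE_{\cR})}[\bu]_j + [\bu]_b^{2}.
\]
Hence $\Delta f(e|\cE_{\cR}) = 2\,\bu^T \bA_C^T \be^U_a\,[\bu]_b - 2[\bu]_b \sum_{j\in S_a(\cE_{\cR})}[\bu]_j - [\bu]_b^{2}$, in which only the middle term depends on the current set $\cE_{\cR}$.

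To finish, take any $\cE_{\cR1}\subset\cE_{\cR2}\subset\cE$. Then $S_a(\cE_{\cR1})\subseteq S_a(\cE_{\cR2})$, and because every entry of the Perron eigenvector $\bu$ is nonnegative (the Perron--Frobenius fact noted in the text, using that $\bB=\bA_C^T\bA_C$ is PSD with nonnegative entries), we have $\sum_{j\in S_a(\cE_{\cR2})}[\bu]_j \ge \sum_{j\in S_a(\cE_{\cR1})}[\bu]_j\ge 0$ and $[\bu]_b\ge 0$. Since this larger quantity is \emph{subtracted}, it follows that $\Delta f(e|\cE_{\cR2}) \le \Delta f(e|\cE_{\cR1})$, which is precisely the diminishing-returns condition. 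Combined with the separately established monotonicity, this proves $f$ is monotone submodular.

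I expect the only real obstacle to be the bookkeeping in the second step: correctly isolating which portion of the quadratic penalty depends on the current edge set, and being careful that the added edge is genuinely new so that the square expands cleanly. Everything then rests on the single structural fact that $\bu$ has nonnegative entries; without it the sign of the $\cE_{\cR}$-dependent term could not be controlled and the diminishing-returns inequality could fail.
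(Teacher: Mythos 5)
Your proof is correct, and it reaches the identical discrete-derivative inequality as the paper, but through a genuinely different decomposition. The paper never isolates a perfect square: it first re-expands $f(\cE_{\cR})$ (via its auxiliary lemma on the equivalent expression) into within-$\cE_{\cR}$ pair terms plus twice the cross terms between $\cE_{\cR}$ and $\cE \setminus \cE_{\cR}$, and then manipulates those triple sums to get, for $e=(u,v)$,
\begin{align}
\Delta f(e|\cE_{\cR}) = [\bu]_u [\bu]_v + 2 \sum_{s \in \cV_{host},\,(u,s) \in \cE \setminus (\cE_{\cR} \cup e)} [\bu]_u [\bu]_s, \nonumber
\end{align}
after which submodularity follows from $\cE \setminus (\cE_{\cR2} \cup e) \subset \cE \setminus (\cE_{\cR1} \cup e)$ and nonnegativity of $\bu$. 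You instead work straight from the definition of $f$, splitting it into a modular part plus the negative of the per-user quadratic penalty $g(\cE_{\cR})=\sum_{i}\bigl(\sum_{j:(i,j)\in\cE_{\cR}}[\bu]_j\bigr)^2$, and show $g$ is supermodular; your expression for $\Delta f(e|\cE_{\cR})$ agrees with the paper's once one expands $\bu^T \bA_C^T \be^U_a = \sum_{s:(a,s)\in\cE}[\bu]_s$ and absorbs the $e$-term. What your route buys: it bypasses the paper's equivalent-expression lemma and its cross-term bookkeeping entirely, and it makes the mechanism of diminishing returns transparent (a nonnegative, set-monotone quantity is being subtracted). What the paper's form buys: the same expansion exhibits each discrete derivative as a sum of nonnegative terms, so monotonicity and submodularity both fall out of one computation, and it feeds directly into the constants used later in Theorem \ref{thm_bound_f}. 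Both arguments ultimately rest on the same structural fact you identify: Perron--Frobenius nonnegativity of the leading eigenvector $\bu$ of $\bB=\bA_C^T\bA_C$.
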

\begin{proof}
	The proof can be found in Appendix \ref{proof_thm_submodularity_edge}.
\end{proof}

With the established results, a greedy segmentation algorithm (Algorithm \ref{algo_greedy_seg_score}) is proposed.  Algorithm \ref{algo_greedy_seg_score} computes the edge score function $f \lb (i,j) \rb = 2 \bu^T \bA_C^T \be_i^U [\bu]_j-  [\bu]_j^2$ for every edge $(i,j) \in \cE$  and segments $q$ edges of highest scores to create new user accounts. 
For efficient computation step 2 of Algorithm \ref{algo_greedy_seg_score} can be represented by the matrix form
$\bF = \Lb 2 \bA_C^T \bu \bu^T -\bone_U \lb \bu \odot \bu \rb^T \Rb \odot \bA_C$,
where $[\bF]_{ij}=f \lb (i,j) \rb$ if $(i,j) \in \cE_{\cR}$, and 
$[\bF]_{ij}=0$ otherwise.

\begin{algorithm}[t]
	\caption{Greedy score segmentation algorithm}
	\label{algo_greedy_seg_score}
	\begin{algorithmic}
		\State \textbf{Input:} $\bA_C$, number of segmented edges $q$
		\State \textbf{Output:} modified access adjacency matrix $\bA_C^q$
		\If{recalculating score}
		\State Initialization: $\bA_C^{old} = \bA_C$. $\cE_{old} \leftarrow \cE$. $\cE_{\cR} \leftarrow \varnothing$.
		\For{$z=1$ to $q$}
		\State 1. Compute the leading eigenvector $\bu$ of 
		\State $~~~~\bB={\bA_C^{old}}^T \bA_C^{old}$
		\State 2. Compute score $f \lb (i,j) \rb = 2 \bu^T {\bA_C^{old}}^T \be_i^U [\bu]_j$
		\State~~~~$-  [\bu]_j^2$ for all $(i,j) \in \cE_{old}$
		\State 3. Remove the highest scored edge 
		$(i^*,j^*) \in \cE_{old}$
		\State~~~~from $\bA_C^{old}$  
		\State 4.  $\bA_C^{old} = \bA_C^{old} - \be^U_{i^*} {\be^N_{j^*}}^T $. $\cE_{old} \leftarrow \cE_{old} \setminus (i^*,j^*)$.
		\State~~~~$\cE_{\cR} \leftarrow\cE_{\cR} \cup (i^*,j^*)$.
		\EndFor
		\Else
		\State 1. Compute the leading eigenvector $\bu$ of $\bB=\bA_C^T \bA_C$
		\State 2. Compute score $f \lb (i,j) \rb = 2 \bu^T \bA_C^T \be_i^U [\bu]_j-  [\bu]_j^2$
		\State~~~~for all $(i,j) \in \cE$
		\State 3. Remove the $q$ edges of highest scores from $\bA_C$ 
		\State 4. Store this set of $q$ edges in $\cE_{\cR}$  
		\EndIf
		\State 5. Segment the removed edges in $\cE_{\cR}$ to create new users. A new user $u$ has access to a set of hosts
		$\{s: (u,s) \in \cE_{\cR}\}$
		\State 6. Obtain the modified access adjacency matrix $\bA_C^q$
	\end{algorithmic}
\end{algorithm}

Using the monotonic submodularity of $f(\cE_{\cR})$ in Theorem \ref{thm_submodularity_edge}, the following theorem shows that this greedy algorithm (Algorithm \ref{algo_greedy_seg_score} without score recalculation) has performance guarantee on spectral radius reduction relative to the optimal batch edge removal strategy of combinatorial computation complexity for selecting the best $q$ edges.

\begin{theorem}
	\textnormal{(Greedy segmentation without score recalculation)}~
	\label{thm_bound_f}
	Let $\cE_{\cR}^{opt}$ be the optimal batch edge removal set with $|\cE_{\cR}^{opt}|=q \geq 1$ 
	that maximizes $f(\cE_{\cR})$ and let $\cE_{\cR}^{q}$ with $|\cE_{\cR}^{q}|=q$ be the greedy edge removal set obtained from Algorithm \ref{algo_greedy_seg_score}. If $f(\cE_{\cR}^{q})>0$, then there exists some constant $c^\prime>0$ such that 
	\begin{align}
	&f(\cE_{\cR}^{opt})-f(\cE_{\cR}^{q}) \leq \lb 1-\frac{1}{q} \rb^q f(\cE_{\cR}^{opt}) \leq \frac{1}{\mathtt{e}} f(\cE_{\cR}^{opt}); 
	\nonumber \\
	& f(\cE_{\cR}^{opt}) \geq		\lambda_{\max}(\bB) - \lambda_{\max} \lb \bBt \lb \cE_{\cR}^q \rb \rb \geq  c^\prime  f(\cE_{\cR}^{opt}).  \nonumber	
	\end{align}
\end{theorem}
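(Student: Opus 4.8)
The plan is to read the two displayed chains as, respectively, the classical greedy guarantee for normalized monotone submodular maximization under a cardinality constraint, and a transcription of that guarantee into spectral-radius language using the bounds already in hand. The first step is to record the three structural facts that license the submodular machinery: $f(\varnothing)=0$, which is immediate from \eqref{eqn_f_enterprise} because both sums are empty; monotonicity of $f$, established by the lemma in Appendix \ref{proof_lem_monotone_f}; and submodularity of $f$, established by Theorem \ref{thm_submodularity_edge}. With these, $\cE_{\cR}^{opt}$ is a size-$q$ maximizer of a normalized monotone submodular $f$ and $\cE_{\cR}^q$ is the set built by successive maximal-gain selections, placing us exactly in the Nemhauser--Wolsey--Fisher setting.

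For the first chain I would reproduce the one-step contraction argument rather than merely cite the $1-1/\mathtt{e}$ bound, since the contraction is what produces the sharper factor $(1-1/q)^q$. Let $S_i$ denote the greedy set after $i$ selections and set $g_i = f(\cE_{\cR}^{opt}) - f(S_i)$. Monotonicity gives $f(\cE_{\cR}^{opt}) \leq f(\cE_{\cR}^{opt} \cup S_i)$, and submodularity lets me bound the total marginal gain of the at most $q$ elements of $\cE_{\cR}^{opt} \setminus S_i$ by the sum of their individual gains over $S_i$; hence at least one of them has marginal gain at least $g_i/q$. Because greedy takes the best available element, $f(S_{i+1}) - f(S_i) \geq g_i/q$, so $g_{i+1} \leq (1-1/q) g_i$. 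Iterating from $g_0 = f(\cE_{\cR}^{opt})$ (here $f(\varnothing)=0$ is used) yields $g_q \leq (1-1/q)^q f(\cE_{\cR}^{opt})$, which is the first inequality; the second is the elementary estimate $(1-1/q)^q \leq 1/\mathtt{e}$, valid for all $q \geq 1$.

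For the second chain I would treat the two inequalities separately. The left inequality comes from applying \eqref{eqn_eqn_B_removed_LB_short} to the greedy set, which gives $\lambda_{\max}(\bB) - \lambda_{\max}(\bBt(\cE_{\cR}^q)) \leq f(\cE_{\cR}^q)$, followed by $f(\cE_{\cR}^q) \leq f(\cE_{\cR}^{opt})$ from optimality of $\cE_{\cR}^{opt}$. For the right inequality I would invoke Lemma \ref{lem_bound_f} at $\cE_{\cR}^q$, legitimate because $f(\cE_{\cR}^q)>0$ is assumed, to obtain a constant $c>0$ with $\lambda_{\max}(\bB) - \lambda_{\max}(\bBt(\cE_{\cR}^q)) \geq c\, f(\cE_{\cR}^q)$; substituting the first-chain lower bound $f(\cE_{\cR}^q) \geq (1-(1-1/q)^q) f(\cE_{\cR}^{opt})$ and setting $c' = c(1-(1-1/q)^q)$ gives the claim, with $c'>0$ since $(1-1/q)^q < 1$ for every $q \geq 1$.

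The routine pieces are this assembly and the scalar estimate on $(1-1/q)^q$; the genuine content sits entirely in the contraction step, and the single point I expect to require the most care is the justification that some unpicked optimal element has marginal gain at least $g_i/q$. That is precisely where monotonicity (to pass from $\cE_{\cR}^{opt}$ to $\cE_{\cR}^{opt}\cup S_i$) and submodularity (to distribute the remaining gain across the at most $q$ missing elements) must both be used, and it also quietly relies on the greedy rule being selection by largest marginal gain; a secondary subtlety is that the constant $c$ from Lemma \ref{lem_bound_f} may depend on $\cE_{\cR}^q$ but remains positive, so that the stated $c'$ is a bona fide positive constant.
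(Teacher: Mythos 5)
Your proposal is correct and follows essentially the same route as the paper: the Nemhauser--Wolsey--Fisher contraction argument (via monotonicity and submodularity from Theorem \ref{thm_submodularity_edge}) for the first chain, then the lower bound \eqref{eqn_eqn_B_removed_LB_short} together with $f(\cE_{\cR}^{q}) \leq f(\cE_{\cR}^{opt})$ for the left inequality and Lemma \ref{lem_bound_f} for the right one. The only cosmetic difference is your constant $c' = c\lb 1-(1-1/q)^q\rb$, which is a slightly sharper choice than the paper's $c' = c\lb 1-\mathtt{e}^{-1}\rb$; both are positive, so both satisfy the claim.
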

\begin{proof}
	The proof can be found in Appendix \ref{proof_thm_bound_f}.
\end{proof}


As a variant of Algorithm \ref{algo_greedy_seg_score} without score recalculation, for better traceability one may desire to successively recalculate the largest eigenvector $\bu$ and update the edge score function $f(i,j)$ after each edge removal. 
The following corollary provides a theoretical analysis of the greedy segmentation algorithm with score recalculation (Algorithm \ref{algo_greedy_seg_score} with score recalculation), which shows that score recalculation can successively reduce  the spectral radius of $\bB$.

\begin{corollary}
	\label{cor_greedy_seg}
	\textnormal{(Greedy segmentation with score recalculation)}~
	Let $\bAt(\cE_{\cR})$ denote the adjacency matrix of $G_C \setminus \cE_{\cR}$ for some $\cE_{\cR} \subset \cE$, and let $\bu_{\cE_{\cR}}$ denote the largest eigenvector of $\bBt \lb \cE_{\cR} \rb$.	
	For any edge removal set $\cE_{\cR} \subset \cE$, let $f_{\cE_{\cR}}(i,j)= 2 \bu_{\cE_{\cR}}^T \bAt(\cE_{\cR})^T \be^U_i [\bu_{\cE_{\cR}}]_j - [\bu_{\cE_{\cR}}]_j^2 $, and let $(i^*,j^*)$ be a maximizer of $f_{\cE_{\cR}}(i,j)$.
	Then $\lambda_{\max}\lb \bBt(\cE_{\cR} ) \rb \geq  \lambda_{\max}\lb \bBt(\cE_{\cR} \cup (i^*,j^* )) \rb$. Furthermore, if $f_{\cE_{\cR}}(i^*,j^*)>0$, then $\lambda_{\max}\lb \bBt(\cE_{\cR} ) \rb > \lambda_{\max}\lb \bBt(\cE_{\cR} \cup (i^*,j^* )) \rb$.
\end{corollary}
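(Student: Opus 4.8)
The plan is to reduce everything to monotonicity of the largest singular value of a nonnegative matrix under entrywise domination. Write $\bA' = \bAt(\cE_{\cR})$ and $\bA'' = \bAt(\cE_{\cR} \cup (i^*,j^*)) = \bA' - \be^U_{i^*}{\be^N_{j^*}}^T$, so that $\bBt(\cE_{\cR}) = \bA'^T\bA'$, $\bBt(\cE_{\cR}\cup(i^*,j^*)) = \bA''^T\bA''$, and $\lambda_{\max}(\bBt(\cdot)) = \sigma_{\max}(\cdot)^2$ with $\sigma_{\max}$ the largest singular value. Since the edge $(i^*,j^*)$ is present in $G_C\setminus\cE_{\cR}$, we have $0 \le \bA'' \le \bA'$ entrywise. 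Using that for a nonnegative matrix $M$ one has $\sigma_{\max}(M) = \max_{\bx \ge \bzero,\, \|\bx\|=1}\|M\bx\|$ (because $\|M\bx\| \le \|M|\bx|\|$ entrywise), the entrywise bound immediately gives $\sigma_{\max}(\bA'') \le \sigma_{\max}(\bA')$, i.e. the weak inequality $\lambda_{\max}(\bBt(\cE_{\cR})) \ge \lambda_{\max}(\bBt(\cE_{\cR}\cup(i^*,j^*)))$. Note this holds for removal of any present edge, so the maximizer property and the sign of $f$ only matter for the strict part.

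Next I would record the elementary identity tying the score to the Perron vector. With $\bu = \bu_{\cE_{\cR}}\ge \bzero$ normalized, the fact that $(i^*,j^*)$ is an existing edge gives $[\bA'\bu]_{i^*} = \sum_s [\bA']_{i^*s}[\bu]_s \ge [\bu]_{j^*}$, hence $f_{\cE_{\cR}}(i^*,j^*) = 2[\bu]_{j^*}[\bA'\bu]_{i^*} - [\bu]_{j^*}^2 \ge [\bu]_{j^*}^2 \ge 0$, and $f_{\cE_{\cR}}(i^*,j^*)>0$ if and only if $[\bu]_{j^*}>0$. This converts the hypothesis $f_{\cE_{\cR}}(i^*,j^*)>0$ into the statement that the host $j^*$ lies in the support of the leading eigenvector.

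The crux is the strict inequality, which I would prove by contraposition through the equality case of the singular-value bound. Assume $\sigma_{\max}(\bA'') = \sigma_{\max}(\bA') =: \sigma$ and pick a nonnegative unit right singular vector $\bw$ of $\bA''$ with $\|\bA''\bw\| = \sigma$ (possible since $\bA''\ge \bzero$). Then $\sigma = \|\bA''\bw\| \le \|\bA'\bw\| \le \sigma$, so $\|\bA'\bw\| = \sigma$, i.e. $\bw$ is a leading eigenvector of $\bBt(\cE_{\cR})$; moreover $0 \le \bA''\bw \le \bA'\bw$ with equal norms forces $\bA''\bw = \bA'\bw$, so $(\bA'-\bA'')\bw = [\bw]_{j^*}\be^U_{i^*} = \bzero$, whence $[\bw]_{j^*} = 0$. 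If the leading eigenvalue of $\bBt(\cE_{\cR})$ is simple --- which is what ``the largest eigenvector $\bu_{\cE_{\cR}}$'' presupposes (e.g. when the induced host graph is connected, so $\bBt(\cE_{\cR})$ is irreducible and Perron--Frobenius applies) --- then $\bw = \bu$, giving $[\bu]_{j^*} = 0$ and hence $f_{\cE_{\cR}}(i^*,j^*) = 0$ by the identity above. Contrapositively, $f_{\cE_{\cR}}(i^*,j^*)>0$ forces $\sigma_{\max}(\bA'') < \sigma$, the desired strict inequality.

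The main obstacle is precisely this last step: the natural Courant--Fischer/Rayleigh estimate used in (\ref{eqn_segment_spectral_radius_LB}) only lower-bounds the perturbed spectral radius by $\lambda_{\max}(\bBt(\cE_{\cR})) - f_{\cE_{\cR}}(i^*,j^*)$, which is the wrong direction for establishing a strict decrease; one must instead analyze the equality case of the monotonicity bound, and this is what forces the simplicity (irreducibility) hypothesis. I would flag that without a simple dominant eigenvalue the strict claim can fail --- if the induced host graph has two disconnected components tied for the maximum eigenvalue, removing an active edge inside one component lowers that component's eigenvalue but leaves the global $\lambda_{\max}$ unchanged even though $f_{\cE_{\cR}}(i^*,j^*)>0$ --- so the assumption that $\bu_{\cE_{\cR}}$ is well defined as the largest eigenvector is doing real work.
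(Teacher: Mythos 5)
Your proposal is correct and takes a genuinely different route from the paper's. The paper proves Corollary \ref{cor_greedy_seg} in one line, by specializing Lemma \ref{lem_bound_f} and Theorem \ref{thm_bound_f} to $\bB \mapsto \bBt(\cE_{\cR})$ and $q=1$: there the weak inequality comes from sandwiching the post-removal matrix with its \emph{own} top eigenvector $\bv$ (Courant--Fischer), which gives $\lambda_{\max}(\bBt(\cE_{\cR}\cup(i^*,j^*))) \le \lambda_{\max}(\bBt(\cE_{\cR})) - g$ with $g \ge 0$ the analogue of the score evaluated at $\bv$, and the strict inequality rests on Lemma \ref{lem_bound_f}'s claim that $f_{\cE_{\cR}}(i^*,j^*)>0$ forces $g \ge c\, f_{\cE_{\cR}}(i^*,j^*)$ for some $c>0$. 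You replace the first step by singular-value monotonicity under entrywise domination (writing $\bA'$, $\bA''$ for the pre- and post-removal access matrices, $0 \le \bA'' \le \bA'$ gives $\sigma_{\max}(\bA'') \le \sigma_{\max}(\bA')$), and the second by an equality-case (rigidity) analysis, aided by the clean observation that $f_{\cE_{\cR}}(i^*,j^*) > 0$ if and only if $[\bu_{\cE_{\cR}}]_{j^*} > 0$.

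What your route buys is an honest accounting of hypotheses, and your closing caveat is not cosmetic: it pinpoints a degeneracy that the paper's own argument does not handle. Concretely, take two users each accessing two private hosts; then $\bB$ is block-diagonal with two all-ones $2\times2$ blocks tied at $\lambda_{\max}(\bB)=2$. Whether one takes $\bu$ supported on one block or spread over both, the top-scoring edge has $f>0$ (e.g., $f=3/2$ for the supported choice), yet removing it leaves $\lambda_{\max}=2$, so the strict conclusion fails. The same example breaks the paper's proof of Lemma \ref{lem_bound_f} in Appendix \ref{proof_lem_bound_f}: the post-removal Perron vector $\bv$ is supported on the untouched block, so $g(\cE_{\cR})=0$ while $f(\cE_{\cR})>0$, and no constant $c>0$ with $g \ge c\, f$ can exist. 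So the simplicity/irreducibility hypothesis you isolate is not a limitation of your method relative to the paper's proof --- it is the assumption the paper needs but leaves implicit, and your rigidity argument is the more rigorous of the two.
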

\begin{proof}
	The proof can be found in Appendix \ref{proof_cor_greedy_seg}.
\end{proof}

In addition to establishing the performance guarantee of greedy score segmentation (Algorithm \ref{algo_greedy_seg_score}) for reducing $\lambda_{\max}(\bB)$, the following theorem shows that the two intuitive  greedy segmentation algorithms proposed in Algorithm \ref{algo_greedy_seg_user_host}, with an aim of successively segmenting the edge connecting to the most connected user or host, are also effectively reducing an upper bound on $\lambda_{\max}(\bB)$. The terms $\bd^U=\bA_C \bone_{N}$ and $\bd^N=\bA_C^T \bone_{U}$ denote the degree vector of users and hosts, respectively, and 
the terms $d_{\max}^{user}$ and $d_{\max}^{host}$ denote the maximum degree of users and hosts in $G_C$, respectively. 
\begin{theorem}
	\textnormal{(Greedy user-(host-)first segmentation)}~
	\label{thm_greedy_user_host}
	If an edge $(i,j)$ is removed from $G_C$ and $\bBt(i,j)$ is irreducible, then
	\begin{align}
	\lambda_{\max} \lb \bBt(i,j) \rb &\leq  d_{\max}^{user} \cdot d_{\max}^{host} \nonumber \\
	&~~~- \max_{s \in \{1,2,\ldots,N\}} \Lb  \lb [\bd^U]_i-1 \rb \be_j^N-\bA_C^T \be_i^U \Rb_s.  	 \nonumber 
	\end{align}
\end{theorem}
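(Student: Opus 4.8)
The plan is to reduce the claim to a maximum-row-sum estimate and then to cap each row sum by the degree product $d_{\max}^{user}\cdot d_{\max}^{host}$. First I would record that $\bBt(i,j)=\bAt_C^T\bAt_C$ with $\bAt_C=\bA_C-\be_i^U{\be_j^N}^T$ is symmetric, positive semidefinite, and entrywise nonnegative, and by hypothesis irreducible. Hence its largest eigenvalue is the Perron root, and the standard fact that the spectral radius of a nonnegative matrix is dominated by its largest row sum (equivalently, the Collatz--Wielandt characterization with the all-ones test vector $\bone_N$) gives
\begin{align}
\lambda_{\max}\lb \bBt(i,j) \rb \leq \max_{s\in\{1,\ldots,N\}} \Lb \bBt(i,j)\bone_N \Rb_s. \nonumber
\end{align}
This replaces an eigenvalue computation by an explicit computation of the row-sum vector, which is the object I can control through the rank-structure of the edge removal in (\ref{eqn_B_one_edge_removal}).

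Next I would compute $\bBt(i,j)\bone_N$ directly from (\ref{eqn_B_one_edge_removal}): multiplying the four terms by $\bone_N$ and using ${\be_j^N}^T\bone_N=1$ together with $\bA_C\bone_N=\bd^U$ (so that ${\be_i^U}^T\bA_C\bone_N=[\bd^U]_i$) collapses the expression to
\begin{align}
\bBt(i,j)\bone_N = \bB\bone_N - \bA_C^T\be_i^U - \lb [\bd^U]_i-1 \rb \be_j^N. \nonumber
\end{align}
The key input is then a per-host cap on the row sums of the unperturbed matrix: since $[\bB\bone_N]_s=\sum_{u}[\bA_C]_{us}[\bd^U]_u \leq d_{\max}^{user}\sum_u [\bA_C]_{us}=d_{\max}^{user}[\bd^N]_s\leq d_{\max}^{user}\cdot d_{\max}^{host}$, every entry of $\bB\bone_N$ is at most the degree product. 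Substituting this cap into the displayed row-sum identity and maximizing over $s$ pulls out the base term $d_{\max}^{user}\cdot d_{\max}^{host}$ and leaves a maximization over $s$ of the host-indexed correction built from $\lb[\bd^U]_i-1\rb\be_j^N$ and $\bA_C^T\be_i^U$, which is the subtracted quantity appearing in the statement.

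The step I expect to be the main obstacle is the correct bookkeeping of this correction term together with the role of irreducibility. The per-host cap $[\bB\bone_N]_s\leq d_{\max}^{user}\cdot d_{\max}^{host}$ is not simultaneously tight with the edge-removal correction at every host: removing $(i,j)$ subtracts the large amount $[\bd^U]_i$ from the row sum at host $j$ but only unit amounts at the remaining hosts in the support of $\bA_C^T\be_i^U$, so one must argue carefully which index $s$ governs the maximum once the cap is imposed, rather than treating the cap and the two localized reductions as independent. This is where I would invoke irreducibility, which makes $\lambda_{\max}$ the simple Perron root with a strictly positive eigenvector and thereby certifies that the row-sum estimate is the operative bound, so that the reduction concentrated at host $j$ and at the neighbors of user $i$ can be consolidated into the single host-indexed maximum in the claim. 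The remaining effort is the algebraic rearrangement needed to cast this localized reduction into the canonical-vector form $\lb[\bd^U]_i-1\rb\be_j^N-\bA_C^T\be_i^U$ of the statement.
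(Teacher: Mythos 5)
Your first three steps are exactly the paper's proof (Appendix~\ref{proof_thm_greedy_user_host}): the max-row-sum bound on the spectral radius of a nonnegative matrix, the row-sum identity $\bBt(i,j)\bone_N = \bB\bone_N - \bA_C^T\be_i^U - \lb [\bd^U]_i-1\rb \be_j^N$, and the per-host cap $[\bB\bone_N]_s = [\bA_C^T\bd^U]_s \leq d_{\max}^{user}\cdot d_{\max}^{host}$; all three are correct. The step you flagged as the main obstacle, however, is not merely an obstacle — it is a gap that cannot be closed, by irreducibility or by any bookkeeping. Decoupling the cap from the correction legitimately yields only $\max_s \Lb \bBt(i,j)\bone_N\Rb_s \leq d_{\max}^{user} d_{\max}^{host} - \min_s \Lb \bA_C^T\be_i^U + \lb [\bd^U]_i-1 \rb \be_j^N \Rb_s$: the correction enters through its \emph{minimum} over $s$, with both localized reductions carrying the same sign. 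Converting this into $-\max_s \Lb \lb[\bd^U]_i-1\rb\be_j^N - \bA_C^T\be_i^U\Rb_s$, as the statement requires, flips the sign of the $\bA_C^T\be_i^U$ part and trades a minimum for a maximum; that is precisely the unjustified jump, and it is the same jump the paper's own proof makes in its final display, so the published argument is broken at the very spot you identified.

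In fact the theorem itself is false as stated. Take $U=N=5$, $\bA_C = \bone_5\bone_5^T$ (every user accesses every host), and remove the edge $(1,1)$. Then $d_{\max}^{user}=d_{\max}^{host}=5$, $[\bd^U]_1=5$, $\bA_C^T\be_1^U=\bone_5$, so the claimed bound is $25 - \max_s\Lb 4\be_1^N - \bone_5 \Rb_s = 25-3 = 22$. But $\bBt(1,1)$ has entries $4$ in its first row and column and $5$ everywhere else; it is entrywise positive (hence irreducible), and the symmetric ansatz $(a,b,b,b,b)$ for its Perron vector gives $\lb\lambda-4\rb\lb\lambda-20\rb = 64$, i.e., $\lambda_{\max}\lb \bBt(1,1)\rb = 12+\sqrt{128}\approx 23.31 > 22$. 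Note that the maximum row sum here is $24$, and the valid decoupled bound above is also $24$, so your steps 1--3 are consistent with this example; only the final consolidation fails. Your hope that irreducibility rescues that consolidation is misplaced: irreducibility and positivity of the Perron vector only underwrite the row-sum bound you already used (indeed, the upper bound $\lambda_{\max}\leq$ max row sum holds for every nonnegative matrix), and they have no bearing on the min-versus-max and sign discrepancy in the last step.
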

\begin{proof}
	The proof and the case when $\bBt(i,j)$ is reducible can be found in Appendix \ref{proof_thm_greedy_user_host}.
\end{proof}

\begin{algorithm}[t]
	\caption{Greedy user-(host-)first segmentation algorithm}
	\label{algo_greedy_seg_user_host}
	\begin{algorithmic}
		\State \textbf{Input:} $\bA_C$, number of segmented edges $q$
		\State \textbf{Output:} modified access adjacency matrix $\bA_C^q$
		\State Initialization: $\bA_C^{old} = \bA_C$. $\cE_{old} \leftarrow \cE$. $\cE_{\cR} \leftarrow \varnothing$.
		\For{$z=1$ to $q$}
		\State 1. Compute user (host) degree vector 
		$\bd^{U}=\bA_C^{old} \bone_{N}$ 	\State~~~~($\bd^{N}={\bA_C^{old}}^T \bone_U$)	
		\State 2. Obtain $i^*=\arg \max_i [\bd^U]_i$ and 
		\State~~~~$j^*=
		\arg \max_{j: [\bA_C^{old}]_{i^*j}>0 } [\bd^{N}]_h$ 
		\State~~~~($j^*=\arg \max_j [\bd^N]_j$ and 
		\State ~~~~$i^*=\arg \max_{i: [\bA_C^{old}]_{ij^*}>0} [\bd^{U}]_i$)	
		\State 3. Remove the  edge $(i^*,j^*) \in \cE_{old}$ from $\bA_C^{old}$.  
		\State 4.  $\bA_C^{old} = \bA_C^{old} - \be^U_{i^*} {\be^N_{j^*}}^T$. $\cE_{old} \leftarrow \cE_{old} \setminus (i^*,j^*)$.
		\State~~~~$\cE_{\cR} \leftarrow \cE_{\cR} \cup (i^*,j^*)$
		\EndFor
		\State 5. Segment the removed edges in $\cE_{\cR}$ to create new 
		users. A new user $u$ has access to a set of hosts
		$\{s: (u,s) \in
		\cE_{\cR}\}$
		\State 6. Obtain the modified access adjacency matrix $\bA_C^{q}$
	\end{algorithmic}
\end{algorithm}

Since the term $\bA_C^T \be_i^U$ in Theorem \ref{thm_greedy_user_host} is a vector of access connections of user $i$,  Theorem \ref{thm_greedy_user_host} suggests a greedy user-first segmentation approach that segments the edge between the user of maximum degree and the corresponding accessible host of maximum degree
in order to reduce the upper bound on spectral radius in  Theorem \ref{thm_greedy_user_host}. Similar analysis apples to the greedy host-first segmentation approach in  Algorithm \ref{algo_greedy_seg_user_host}.

\section{Hardening on Host-Application  Graph}
\label{sec_hardening}
In this section we discuss two countermeasures for constraining lateral movements on the host-application graph. Edge hardening refers to securing access from application $k$ to host $j$, and in effect reducing the compromise probability $[\bP]_{kj}$. Node hardening refers to securing a particular host and in effect increasing its hardening level.

Recall from (\ref{eqn_casecade_host_app_3_short}) that the reachability of lateral movement on host-application graph is governed by the matrix $\bJ=(\bP \otimes \bone_N)^T \bA^T$. Note that although $\bJ$ is in general not a symmetric matrix, its entries are nonnegative and hence by the Perron-Frobenious theorem \cite{HornMatrixAnalysis} $\lambda_{\max}(\bJ)$ is real and nonnegative, and the entries of its largest eigenvector are nonnegative.

Hardening a host $j$ for an application $k$ means that after hardening the compromise probability $[\bP]_{kj}$ is reduced to some value $\epsilon_{kj}$ such that $[\bP]_{kj} > \epsilon_{kj} \geq 0$. Let $\cH$ denote the set of hardened edges  and
let $\bPt_{\cH}$ be the compromise probability matrix after edge hardening.  Then we have
$\bPt_{\cH}=\bP-\sum_{(k,j) \in \cH}  \lb [\bP]_{kj}-\epsilon_{kj}  \rb \be_k^K {\be_j^N}^T.$
Let $\bJt(\cH)=(\bPt_{\cH} \otimes \bone_N)^T \bA^T$ and let $\by$ be the largest eigenvector of $\bJ$. We can show that
\begin{align}
\label{eqn_J_LB_short}
&\lambda_{\max} \lb \bJt (\cH) \rb \geq \lambda_{\max} \lb \bJ  \rb - \by^T \Delta \bJ_{\cH} \by;  \\
\nonumber
&\Delta \bJ_{\cH}=\Lb \lb \sum_{(k,j) \in \cH}
\lb [\bP]_{kj}-\epsilon_{kj}  \rb \be_k^K {\be_j^N}^T \rb \otimes \bone_N   \Rb^T \bA^T.
\end{align}
The proof of  (\ref{eqn_J_LB_short}) can be found in Appendix \ref{proof_eqn_J_LB}.

Let $\phi(\cH)=\by^T \Delta \bJ_{\cH} \by$ be a score function that reflects the effect of the edge hardening set $\cH$ on spectral radius reduction of $\bJ$. 
The lemma presented in Appendix \ref{proof_lem_phi_monotone} shows that $\phi(\cH)$ is a monotonic increasing set function of $\cH$. 
The following analysis shows that
$\phi(\cH)$ is associated with a pair of upper and lower bounds on the spectral radius of $\bJ$ after edge hardening.

\begin{algorithm}[t]
	\caption{Greedy edge hardening algorithm}
	\label{algo_greedy_edge_harden_score}
	\begin{algorithmic}
		\State \textbf{Input:} $\bJ=(\bP \otimes \bone_N)^T \bA^T$, number of hardened edges 
		$\eta$, $\{\epsilon_{kj}\}_{k \in \{1,2,\ldots,K\},j\in \{1,2,\ldots,N\}  }$
		\State \textbf{Output:} modified compromise probability matrix $\bP^\eta$
		\If{recalculating score}
		\State Initialization: $\bP^\eta=\bP$. $\bJ^{old}=\bJ$. 
		\For{$z=1$ to $\eta$}
		\State 1. Compute the leading eigenvector $\by$ of $\bJ^{old}$
		\State 2. Compute score $\phi \lb (k,j) \rb = \by^T \Delta \bJ^{old}_{(k,j)} \by $
		\State 3. Obtain $(k^*,j^*)=\arg \max_{k,j} \phi \lb (k,j) \rb$
		\State 4. Edge hardening: $[\bP^\eta]_{k^* j^*}=\epsilon_{k^* j^*}$
		\State 5. $\bJ^{old}=(\bP^\eta \otimes \bone_N)^T \bA^T$ (see Appendix \ref{proof_efficient_update})
		\EndFor
		\Else
		\State Initialization: $\bP^\eta=\bP$
		\State 1. Compute the leading eigenvector $\by$ of $\bJ$
		\State 2. Compute score $\phi \lb (k,j) \rb = \by^T \Delta \bJ_{(k,j)} \by $
		\State 3. Find the $\eta$ edges of highest scores 
		\State 4. Store this set of $\eta$ edges in $\cH$  
		\State 5. Edge hardening: $[\bP^\eta]_{kj}=\epsilon_{kj}$ for all $(k,j) \in \cH$
		\EndIf
	\end{algorithmic}
\end{algorithm}

The edge hardening algorithm proposed in Algorithm \ref{algo_greedy_edge_harden_score} is a greedy algorithm that hardens the $\eta$ edges of highest scores between applications and hosts, where the per-edge hardening score is defined as $\phi \lb (k,j) \rb = \by^T \Delta \bJ_{(k,j)} \by $. 
Step 5 in Algorithm \ref{algo_greedy_edge_harden_score} with score recalculation can be updated efficiently by tracking the changes in the matrix $\bJ$ caused by Step 4 (see Appendix \ref{proof_efficient_update}).
The following theorem shows that the hardened edge set obtained from Algorithm \ref{algo_greedy_edge_harden_score} without score recalculation is a maximizer of $\phi(\cH)$.
\begin{theorem}
	\label{thm_harden_wo_recal}
	\textnormal{(Greedy edge hardening without score recalculation)}
	For any hardening set $\cH$ with $|\cH|=\eta \geq 1$, let $\cH^{\eta}$ with $|\cH^{\eta}|=\eta$ be the greedy hardening set
	obtained from Algorithm \ref{algo_greedy_edge_harden_score}. Then $\cH^{\eta}$ is a maximizer of $\phi(\cH)$.
\end{theorem}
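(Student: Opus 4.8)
The plan is to exploit the crucial feature of the no-recalculation branch: the eigenvector $\by$ is held fixed at the leading eigenvector of the \emph{original} matrix $\bJ$ and never updated. Consequently $\phi(\cH)=\by^T \Delta \bJ_{\cH} \by$ depends on the hardening set $\cH$ only through the matrix increment $\Delta \bJ_{\cH}$, and not through any $\cH$-dependent spectral quantity. This is what makes the optimality statement clean, and it is precisely why the analogous claim would fail for the recalculating variant.

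The key step is to observe that $\Delta \bJ_{\cH}$ is \emph{additive} over the edges in $\cH$. By definition
$$\Delta \bJ_{\cH}=\Lb \lb \sum_{(k,j)\in\cH}\lb[\bP]_{kj}-\epsilon_{kj}\rb\be_k^K{\be_j^N}^T\rb\otimes\bone_N\Rb^T\bA^T,$$
and the summation over edges sits inside the bracket. Since the Kronecker product is bilinear, so that $(\bM_1+\bM_2)\otimes\bone_N=\bM_1\otimes\bone_N+\bM_2\otimes\bone_N$, and since transposition and right-multiplication by $\bA^T$ are linear, the whole expression distributes as $\Delta \bJ_{\cH}=\sum_{(k,j)\in\cH}\Delta \bJ_{(k,j)}$. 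Substituting into $\phi$ and using linearity of the quadratic form $\by^T(\cdot)\by$ yields
$$\phi(\cH)=\sum_{(k,j)\in\cH}\by^T\Delta \bJ_{(k,j)}\by=\sum_{(k,j)\in\cH}\phi\lb(k,j)\rb,$$
so $\phi$ is a \emph{modular} set function whose value is just the sum of the per-edge scores computed in Step 2 of Algorithm \ref{algo_greedy_edge_harden_score}.

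With modularity in hand the conclusion is immediate. Maximizing a modular set function over all subsets of fixed cardinality $\eta$ is achieved precisely by selecting the $\eta$ elements of largest individual value: replacing any chosen edge by an unchosen one of smaller score can only lower the total, so the top-$\eta$ set dominates every competitor. The no-recalculation branch of Algorithm \ref{algo_greedy_edge_harden_score} does exactly this—it evaluates $\phi\lb(k,j)\rb$ for every edge and keeps the $\eta$ highest—so its output $\cH^{\eta}$ attains $\max_{|\cH|=\eta}\phi(\cH)$, which is the assertion of the theorem.

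I expect the only genuine work to lie in verifying the additivity of $\Delta \bJ_{\cH}$; the optimality claim is then a one-line greedy-for-modular argument. The subtlety to watch is purely bookkeeping with the Kronecker and transpose operations, confirming that the per-edge increments superpose with \emph{no} cross terms. This holds because $\Delta \bJ_{\cH}$ is linear (not quadratic) in the membership indicator of $\cH$, so I anticipate no real obstacle beyond making that linearity explicit.
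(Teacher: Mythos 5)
Your proposal is correct and takes essentially the same route as the paper: the paper's proof likewise establishes additivity of the score, $\phi(\cH_2)=\phi(\cH_1)+\phi(\cH_2 \setminus \cH_1)$, via the linearity of $\Delta \bJ_{\cH}$ in the hardened edges, iterates it to get $\phi(\cH)=\sum_{s}\phi(H_s)$, and concludes that the top-$\eta$ per-edge selection maximizes this modular function. Your direct derivation of $\Delta \bJ_{\cH}=\sum_{(k,j)\in\cH}\Delta \bJ_{(k,j)}$ from bilinearity of the Kronecker product is just a more explicit phrasing of the same key step.
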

\begin{proof}
	The proof can be found in Appendix \ref{proof_thm_harden_wo_recal}.
\end{proof}

Furthermore, the following theorem shows that Algorithm \ref{algo_greedy_edge_harden_score} without score recalculation has bounded performance guarantee on spectral radius reduction of $\bJ$ relative to that of the optimal batch edge hardening set for which the computation complexity is combinatorial.
\begin{theorem}
	\textnormal{(Performance guarantee of greedy edge hardening without score recalculation)}
	\label{thm_hardening}
	For any hardening set $\cH$ with $|\cH|=\eta \geq 1$,   $\lambda_{\max} (\bJ) \geq \lambda_{\max} \lb \bJt (\cH) \rb$.  Furthermore, let $\cH^{opt}$ with $|\cH^{opt}|=\eta$ be the optimal hardening set that minimizes $\lambda_{\max} \lb \bJt (\cH) \rb $ and let $\cH^{\eta}$ with $|\cH^{\eta}|=\eta$ be the hardening set that maximizes $\phi(\cH)$.  
	If $\lambda_{\max}(\bJ)>0$ and $\phi(\cH^{\eta})>0$, then there exists some constant $c'' > 0 $ such that
	\begin{align}
	\lambda_{\max}(\bJ)- \phi(\cH^\eta) \leq
	\lambda_{\max} \lb \bJt(\cH^{opt}) \rb \leq
	\lambda_{\max}(\bJ)- c''   \phi(\cH^\eta). 
	\nonumber
	\end{align}
\end{theorem}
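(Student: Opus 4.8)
The plan is to prove the three assertions in sequence: the entrywise monotonicity $\lambda_{\max}(\bJ) \geq \lambda_{\max}(\bJt(\cH))$, then the lower bound, and finally the upper bound, which I expect to be the crux. Throughout I would write $\bJt(\cH) = \bJ - \Delta\bJ_{\cH}$ and exploit that, by construction, every factor entering $\Delta\bJ_{\cH}$ is nonnegative.

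For the monotonicity, I would first note that hardening only decreases compromise probabilities, so $0 \leq \bPt_{\cH} \leq \bP$ entrywise; since $\bA$ and $\bone_N$ are nonnegative, this ordering is preserved through the Kronecker product and the matrix products, giving $0 \leq \bJt(\cH) \leq \bJ$ entrywise (equivalently $\Delta\bJ_{\cH} \geq 0$). The monotonicity of the Perron root of a nonnegative matrix under entrywise domination then yields $\lambda_{\max}(\bJt(\cH)) \leq \lambda_{\max}(\bJ)$ for every $\cH$ with $|\cH| = \eta \geq 1$. For the lower bound I would apply the already-established inequality (\ref{eqn_J_LB_short}) to the optimal set, obtaining $\lambda_{\max}(\bJt(\cH^{opt})) \geq \lambda_{\max}(\bJ) - \phi(\cH^{opt})$; since $\cH^{\eta}$ maximizes $\phi$ (Theorem \ref{thm_harden_wo_recal}), we have $\phi(\cH^{opt}) \leq \phi(\cH^{\eta})$, and substituting gives $\lambda_{\max}(\bJt(\cH^{opt})) \geq \lambda_{\max}(\bJ) - \phi(\cH^{\eta})$.

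The upper bound is the hard part, and I would handle it in the spirit of Lemma \ref{lem_bound_f}. First, because $\cH^{opt}$ minimizes $\lambda_{\max}(\bJt(\cdot))$, we have $\lambda_{\max}(\bJt(\cH^{opt})) \leq \lambda_{\max}(\bJt(\cH^{\eta}))$, so it suffices to show $\lambda_{\max}(\bJt(\cH^{\eta})) \leq \lambda_{\max}(\bJ) - c'' \phi(\cH^{\eta})$ for some $c'' > 0$. Here I would invoke finiteness: the hardening configurations of size $\eta$ form a finite family, and with the thresholds $\{\epsilon_{kj}\}$ fixed each configuration determines $\bJt(\cH)$, so I would set $c'' = \min\{[\lambda_{\max}(\bJ) - \lambda_{\max}(\bJt(\cH))]/\phi(\cH) : |\cH| = \eta,\ \phi(\cH) > 0\}$, a minimum over finitely many ratios. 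The main obstacle is to show every such ratio is strictly positive, i.e., that $\phi(\cH) > 0$ forces the strict reduction $\lambda_{\max}(\bJt(\cH)) < \lambda_{\max}(\bJ)$. Since $\phi(\cH) = \by^T \Delta\bJ_{\cH} \by > 0$ already guarantees $\Delta\bJ_{\cH} \neq 0$, this strict reduction follows from the strict monotonicity of the Perron root under entrywise domination when $\bJ$ is irreducible. This step is delicate precisely because $\bJ$ is nonsymmetric, so the Courant-Fischer argument used for the symmetric matrix $\bB$ in the segmentation section is not available and one must reason through the (generally distinct) left and right Perron eigenvectors of $\bJ$; the reducible case would require separate treatment, for instance by restricting to the irreducible diagonal block carrying the support of the Perron eigenvector of $\bJ$. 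Granting the claim, $c'' > 0$ and, because $\phi(\cH^{\eta}) > 0$ by hypothesis, the chain $\lambda_{\max}(\bJt(\cH^{opt})) \leq \lambda_{\max}(\bJt(\cH^{\eta})) \leq \lambda_{\max}(\bJ) - c'' \phi(\cH^{\eta})$ completes the proof.
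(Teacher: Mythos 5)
Your proposal is correct in substance and follows the same three-part skeleton as the paper's proof: entrywise monotonicity, the lower bound obtained by applying (\ref{eqn_J_LB_short}) to $\cH^{opt}$ and then using $\phi(\cH^{opt}) \leq \phi(\cH^{\eta})$, and the upper bound obtained from $\lambda_{\max}\lb\bJt(\cH^{opt})\rb \leq \lambda_{\max}\lb\bJt(\cH^{\eta})\rb$ together with a strictness argument that makes a positive $c''$ exist. Where you differ is in how the two delicate steps are justified, and in both places your route is the more careful one. For monotonicity, the paper writes $\lambda_{\max}\lb\bJt(\cH)\rb = \byt^T\bJ\byt - \byt^T\Delta\bJ_{\cH}\byt$ with $\byt$ the Perron vector of $\bJt(\cH)$ and bounds $\byt^T\bJ\byt \leq \lambda_{\max}(\bJ)$ by invoking the Courant--Fischer theorem; as you observe, $\bJ$ is not symmetric, so this Rayleigh-quotient bound is not actually licensed, whereas your entrywise-domination argument ($0 \leq \bJt(\cH) \leq \bJ$ entrywise implies the ordering of the Perron roots) is unconditionally valid. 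For the constant, the paper derives the strict inequality $\lambda_{\max}(\bJ) > \lambda_{\max}\lb\bJt(\cH^{opt})\rb$ by claiming that $\phi(\cH^{opt}) = 0$ would force $\by = \bzero$; that implication does not hold in general (vanishing of $\by^T\Delta\bJ_{\cH^{opt}}\by$ only forces the support of $\Delta\bJ_{\cH^{opt}}$ to avoid the support of $\by$), after which $c''$ is simply asserted to exist --- effectively your single ratio $[\lambda_{\max}(\bJ) - \lambda_{\max}\lb\bJt(\cH^{opt})\rb]/\phi(\cH^{\eta})$ rather than your finite minimum over all size-$\eta$ sets. Your reduction of strictness to strict Perron-root monotonicity under irreducibility is the honest version of this step, and the reducible case you flag as unresolved is a genuine issue --- but it is not a defect of your argument relative to the paper's: the paper's proof silently fails there too (e.g., if $\bJ$ has two diagonal blocks both attaining $\lambda_{\max}(\bJ)$ and the hardened edges touch only the block carrying $\by$, then $\phi(\cH)$ can be positive while no spectral-radius reduction occurs), so an irreducibility-type hypothesis is implicitly needed by both arguments.
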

\begin{proof}
	The proof can be found in Appendix \ref{proof_thm_hardening}.
\end{proof}

The corollary below
shows Algorithm \ref{algo_greedy_edge_harden_score} with  score recalculation can successively reduce the spectral radius of $\bJ$.
\begin{algorithm}[t]
	\caption{Greedy node hardening algorithm}
	\label{algo_greedy_node_harden}
	\begin{algorithmic}
		\State \textbf{Input:} edge score $\phi((k,j))$, number of hardened 
		nodes $\zeta$, $\{ \alpha_j \}_{j=1}^N$
		\State \textbf{Output:} modified node hardening vector $\bat$
		\State Initialization: $\bat=\ba$
		\State 1. Compute edge hardening score $\phi \lb (k,j) \rb$ for all
			\State~~~~$k \in \{1,2,\ldots,K\}$ and $j \in \{1,2,\ldots,N\}$ 
		\State 2. Compute node hardening score 
		$\rho(j)=\sum_{k=1}^K \phi((k,j))$ 
			\State~~~~for all $j \in \{1,2,\ldots,N\}$ 
		\State 3. Find the first $\zeta$ nodes of highest scores and 
		store this 
			\State~~~~set of $\zeta$ nodes in $\cH^{node}$  
		\State 4. Node hardening: $[\bat]_{j}=\alpha_j$ for all $j \in \cH^{node}$
	\end{algorithmic}
\end{algorithm}

\begin{corollary}
	\label{cor_hardening}
	\textnormal{(Greedy edge hardening with score recalculation)}
	Let $\by_{\cH}$ denote the largest eigenvector of $\bJt(\cH)$ and let $\phi_{\cH} \lb (k,j) \rb=\by_{\cH}^T \bJt \lb \cH \cup (k,j)\rb \by_{\cH}$. For any edge hardening set $\cH$, let $(k^*,j^*)$ be a maximizer of $\phi_{\cH} \lb (k,j) \rb$. Then  $\lambda_{\max} \lb \bJt(\cH) \rb \geq \lambda_{\max} \lb \bJt \lb \cH \cup (k^*,j^*)\rb \rb$. Furthermore, if $\lambda_{\max} \lb \bJt(\cH) \rb>0$ and  $\phi_{\cH} \lb (k^*,j^*) \rb>0$, then  $\lambda_{\max} \lb \bJt(\cH) \rb > \lambda_{\max} \lb \bJt \lb \cH \cup (k^*,j^*)\rb \rb$.
\end{corollary}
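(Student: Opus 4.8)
The plan is to prove the two assertions separately, first the weak monotonicity and then the strict decrease, exploiting throughout that $\bJt(\cH)=(\bPt_{\cH}\otimes\bone_N)^T\bA^T$ is a \emph{nonnegative} matrix for every hardening set $\cH$, so that $\lambda_{\max}$ is its Perron root with nonnegative left and right eigenvectors.

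For the weak inequality I would argue purely by entrywise monotonicity and use nothing about the maximizer. Hardening the edge $(k^*,j^*)$ only lowers the single entry $[\bP]_{k^*j^*}$, so the compromise-probability matrix underlying $\cH\cup(k^*,j^*)$ is entrywise dominated by that underlying $\cH$; since $\bJt$ is obtained by multiplying this matrix by the fixed nonnegative binary matrix $\bA$, it follows that $\bzero\leq\bJt(\cH\cup(k^*,j^*))\leq\bJt(\cH)$ entrywise. By the monotonicity of the Perron--Frobenius root of nonnegative matrices under entrywise decrease (the same Perron--Frobenius property invoked in Sec. \ref{sec_hardening}), $\lambda_{\max}(\bJt(\cH\cup(k^*,j^*)))\leq\lambda_{\max}(\bJt(\cH))$, which gives the first claim for every added edge and in particular for $(k^*,j^*)$.

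For the strict inequality I would first record the structure of the one-edge reduction $\Delta:=\bJt(\cH)-\bJt(\cH\cup(k^*,j^*))\geq\bzero$, in whose quadratic form the recalculated score is $\phi_{\cH}((k^*,j^*))=\by_{\cH}^T\Delta\by_{\cH}$. From the explicit form of $\Delta\bJ_{\cH}$ in (\ref{eqn_J_LB_short}), hardening the single edge $(k^*,j^*)$ perturbs $\bJt(\cH)$ only in its row $j^*$, yielding the rank-one matrix $\Delta=([\bPt_{\cH}]_{k^*j^*}-\epsilon_{k^*j^*})\,\be_{j^*}^N\,\col_{j^*}(\bA_{k^*})^T$, so $\phi_{\cH}((k^*,j^*))=([\bPt_{\cH}]_{k^*j^*}-\epsilon_{k^*j^*})\,[\by_{\cH}]_{j^*}\,(\col_{j^*}(\bA_{k^*})^T\by_{\cH})$. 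Hence $\phi_{\cH}((k^*,j^*))>0$ forces both $[\by_{\cH}]_{j^*}>0$ and $\col_{j^*}(\bA_{k^*})\neq\bzero$. I would then argue by contradiction: suppose $\lambda_{\max}(\bJt(\cH\cup(k^*,j^*)))=\lambda_{\max}(\bJt(\cH))=:\lambda>0$, take a right Perron eigenvector $\bv\geq\bzero$ of $\bJt(\cH\cup(k^*,j^*))$ and a left Perron eigenvector $\bz\geq\bzero$ of $\bJt(\cH)$, and use $\bJt(\cH)\bv=\lambda\bv+\Delta\bv$. Left-multiplying by $\bz^T$ cancels the two $\lambda\bz^T\bv$ terms and leaves $\bz^T\Delta\bv=0$; since $\bz,\Delta,\bv$ are nonnegative this reads $([\bPt_{\cH}]_{k^*j^*}-\epsilon_{k^*j^*})[\bz]_{j^*}(\col_{j^*}(\bA_{k^*})^T\bv)=0$, which contradicts $\col_{j^*}(\bA_{k^*})\neq\bzero$ once the relevant Perron eigenvectors are known to be strictly positive. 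Equivalently, one may reuse the upper-bound half of Theorem \ref{thm_hardening}, applied to the one-edge hardening of the current matrix $\bJt(\cH)$ with its own eigenvector $\by_{\cH}$, to obtain a constant $c''>0$ with $\lambda_{\max}(\bJt(\cH\cup(k^*,j^*)))\leq\lambda_{\max}(\bJt(\cH))-c''\phi_{\cH}((k^*,j^*))$, and then invoke $\phi_{\cH}((k^*,j^*))>0$.

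The main obstacle I anticipate is the non-symmetry of $\bJt(\cH)$, which removes the Rayleigh-quotient monotonicity that makes the analogous symmetric, PSD segmentation argument of Corollary \ref{cor_greedy_seg} routine; one must instead work with both the left and the right Perron eigenvectors (or the Collatz--Wielandt characterization). The genuinely delicate point is reducibility: the contradiction above needs the Perron eigenvectors to be strictly positive on the coordinate $j^*$ and on the support of $\col_{j^*}(\bA_{k^*})$, which is automatic under irreducibility but not in general. This is precisely where the two hypotheses enter: $\lambda_{\max}(\bJt(\cH))>0$ guarantees a nontrivial basic irreducible block carrying the Perron root, and $\phi_{\cH}((k^*,j^*))>0$ guarantees that the hardened edge acts on the part of the coordinate space seen by that block, so that restricting $\bJt(\cH)$ to the dominant block restores strict Perron--Frobenius monotonicity and closes the argument.
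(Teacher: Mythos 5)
Your treatment of the weak inequality is correct, and it is actually on firmer ground than the paper's. The paper disposes of the whole corollary in one line: apply Theorem \ref{thm_hardening} with $\bJ$ replaced by $\bJt(\cH)$ and $\eta=1$; inside that theorem the weak inequality comes from the Rayleigh-quotient bound $\byt^T \bJ \byt \leq \lambda_{\max}(\bJ)$ attributed to Courant--Fischer, which is valid only for symmetric matrices, whereas $\bJt(\cH)$ is not symmetric. Your entrywise argument ($\bzero \leq \bJt(\cH \cup (k,j)) \leq \bJt(\cH)$ entrywise, hence the Perron roots are ordered) is unconditional and proves the first claim for every edge, not just the maximizer. (Two reading notes: the corollary's literal definition $\phi_{\cH}((k,j)) = \by_{\cH}^T \bJt(\cH \cup (k,j)) \by_{\cH}$ is evidently a typo for the quadratic form of the difference $\bJt(\cH)-\bJt(\cH\cup(k,j))$, which is how you and Algorithm \ref{algo_greedy_edge_harden_score} read it; also, computing $\Delta \bJ$ literally from (\ref{eqn_J_LB_short}) gives $([\bPt_{\cH}]_{k^*j^*}-\epsilon_{k^*j^*})\,\be_{j^*}^N (\bA_{k^*}\bone_N)^T$ rather than $\be_{j^*}^N \col_{j^*}(\bA_{k^*})^T$ --- an inconsistency internal to the paper between (\ref{eqn_propagation_1_short}) and (\ref{eqn_propagation_2_short}) --- but either way $\Delta$ is nonnegative, rank one, and supported on row $j^*$, which is all your argument uses.)

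The strict inequality is where there is a genuine gap, and it sits exactly at the step you flagged and then claimed to repair. The assertion that $\lambda_{\max}(\bJt(\cH))>0$ together with $\phi_{\cH}((k^*,j^*))>0$ restores strict Perron--Frobenius monotonicity is false: the score involves only the \emph{right} Perron eigenvector of $\bJt(\cH)$, so its positivity ensures the hardened edge bites on the class supporting that eigenvector, but it cannot exclude a second closed class, invisible to $\by_{\cH}$ and untouched by the hardening, that ties for the spectral radius. Concretely, take $K=1$, $N=2$, $\bA_1 = \bigl(\begin{smallmatrix} 1 & 1 \\ 0 & 1 \end{smallmatrix}\bigr)$, $\bP = (1~~1)$, so that, using the entrywise form $[\bJ]_{js}=\sum_k [\bP]_{kj}[\bA_k]_{sj}$ from (\ref{eqn_propagation_1_short}), $\bJt(\cH) = \bigl(\begin{smallmatrix} 1 & 0 \\ 1 & 1 \end{smallmatrix}\bigr)$. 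Its unique right Perron vector is $\by_{\cH}=(0,1)^T$ and $\lambda_{\max}(\bJt(\cH))=1>0$; the scores are $\phi_{\cH}((1,1))=0$ and $\phi_{\cH}((1,2))=(1-\epsilon_{12})[\by_{\cH}]_2([\by_{\cH}]_1+[\by_{\cH}]_2)=1-\epsilon_{12}>0$, so $(k^*,j^*)=(1,2)$ and both hypotheses hold; yet hardening $(1,2)$ yields $\bigl(\begin{smallmatrix} 1 & 0 \\ \epsilon & \epsilon \end{smallmatrix}\bigr)$, whose spectral radius is still $1$. In your contradiction argument this is precisely the degenerate case you worried about: the left Perron vector of $\bJt(\cH)$ is $\bz=(1,0)^T$, so $\bz^T \Delta \bv = 0$ holds and no contradiction arises. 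The same example defeats your fallback route, which is verbatim the paper's proof (Theorem \ref{thm_hardening} with $\bJ \to \bJt(\cH)$, $\eta=1$): the theorem's key inequality is the symmetric-case Courant--Fischer bound, which fails here (one checks $\byt^T \bJt(\cH) \byt > \lambda_{\max}(\bJt(\cH))$ for $\byt$ the Perron vector of the hardened matrix), and moreover the theorem's displayed conclusion sandwiches $\lambda_{\max}(\bJt(\cH^{opt}))$ for the spectral-radius-optimal set rather than for the score maximizer $(k^*,j^*)$. So the defect is not in your strategy but in the statement itself: the strict inequality needs an extra hypothesis, e.g.\ irreducibility of $\bJt(\cH)$ (under which left and right Perron vectors are strictly positive and your left/right-eigenvector contradiction closes immediately), or uniqueness of the class attaining $\lambda_{\max}$.
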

\begin{proof}
	The proof can be found in Appendix \ref{proof_cor_hardening}.
\end{proof}

Lastly, for node hardening we use the edge hardening score $\phi((k,j))$ to define the node hardening score $\rho(j)$ for host $j$, where $\rho(j)=\sum_{k=1}^K \phi((k,j)).$
In effect, node hardening on host $j$ enhances its hardening level from $[\ba]_j$ to a value $\alpha_j \in [[\ba]_j,1]$.
A greedy node hardening algorithm based on the node hardening score is summarized in Algorithm \ref{algo_greedy_node_harden}.
In Sec. \ref{sec_expetiment_tripartite} we also investigate the performance of two other node score functions based on $\ba$ and $\bJ$ for greedy node hardening, namely
$\rho^{\ba}(j)=1/[\bfa]_j$ and $\rho^{\bJ}(j)=\sum_{s=1}^N [\bJ]_{js}$.

\section{Experimental Results}
\label{sec_expetiment_tripartite}
\subsection{Dataset Description and Experiment Setup}
To demonstrate the effectiveness of the proposed segmentation and hardening strategies against lateral movement attacks, 
we use the event logs and network flows collected from a large enterprise to create a tripartite user-host-application graph as in Fig. \ref{Fig_tripartite} (a) for performance analysis.
This graph contains 5863 users, 4474 hosts, 3 applications, 8413 user-host access records and 6230  host-application-host network flows. All experiments assume that the defender has no knowledge of which nodes are compromised and the defender only uses the given tripartite network configuration for segmentation and hardening.

To simulate a lateral movement attack
we randomly select 5 hosts (approximates 0.1\% of total host number) as the initially compromised hosts and use the algorithms developed in Sec. \ref{sec_cascading} to evaluate the reachability, which is defined as the fraction of reachable hosts 
by propagating on the tripartite graph from the initially compromised hosts.
The initial node hardening level of each host is independently and uniformly drawn from the unit interval between 0 and 1. The compromise probability matrix $\bP$ is a random matrix where the fraction of nonzero entries is set to be 10\% and each nonzero entry is independently and uniformly drawn from the unit internal between 0 and 1.
The compromise probability after hardening, $\epsilon_{kj}$, is set to be $10^{-5}$ for all $k$ and $j$. All experimental results are averaged over 10 trials.

\begin{figure}[t]
	\centering
	\begin{subfigure}[b]{0.24\textwidth}
		\includegraphics[width=\textwidth]{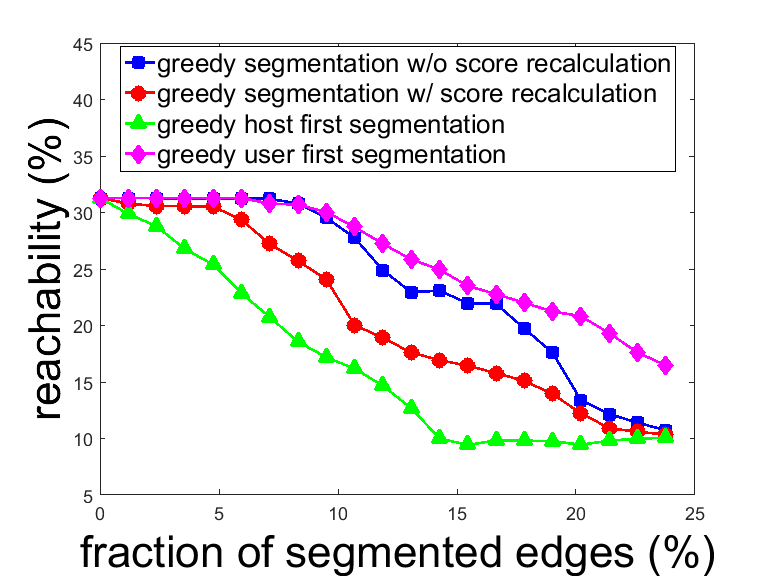}
		\caption{~}
		\label{Fig_segment}
	\end{subfigure}%
	\centering
	\begin{subfigure}[b]{0.24\textwidth}
		\includegraphics[width=\textwidth]{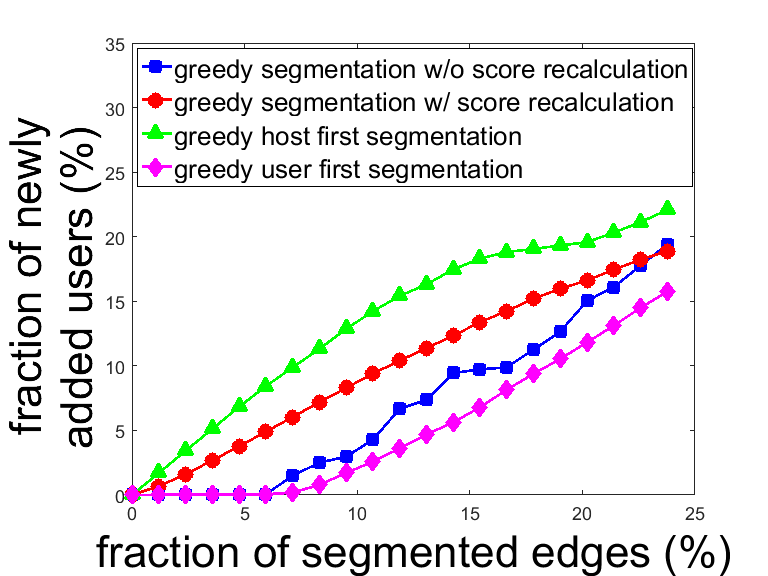}
		\caption{~}
		\label{Fig_segment_new_user}
	\end{subfigure}
	\vspace{-8mm}
	\caption{The effect of segmentation on the user-host access graph. (a) Reachability with respect to different segmentation strategies. (b) Fraction of newly created user accounts from segmentation. Given the same number of segmented edges,  greedy host-first segmentation strategy (green curve) is the most effective approach 
		to constraining reachability (Fig. \ref{Fig_segment_PNNL} (a)) at the cost of most additional accounts (Fig. \ref{Fig_segment_PNNL} (b)).
	}
	\label{Fig_segment_PNNL}
	\vspace{-6mm}
\end{figure}

\subsection{Segmentation against Lateral Movement}
Fig. \ref{Fig_segment_PNNL} shows the effect of different segmentation strategies proposed in Sec. \ref{sec_segmentation} on the user-host graph. In particular, \textit{Fig. \ref{Fig_segment_PNNL} (a) shows that greedy host-first segmentation strategy is the most effective approach to constraining reachability given the same number of segmented edges, since accesses to high-connectivity hosts (i.e., hubs) are segmented. For example, segmenting 15\% of user-host accesses can reduce the reachability to nearly one third of its initial value.} Greedy segmentation with score recalculation is shown to be more effective than that without score recalculation since it is adaptive to user-host access modification during segmentation. Greedy user-first segmentation strategy is not as effective as the other strategies since segmentation does not enforce any user-host access reduction and therefore after segmentation a user can still access the hosts but with different accounts.

Fig.  \ref{Fig_segment_PNNL} (b) shows the fraction of newly created accounts with respect to different segmentation strategies. \textit{There is clearly a trade-off between network security and implementation practicality since Fig.  \ref{Fig_segment_PNNL} suggests that segmentation strategies with better reachability reduction capability also lead to more additional accounts.} 
However, in practice a user might be reluctant to use many accounts to pursue his/her daily jobs even though doing so can greatly mitigate the risk from lateral movement attacks.

\subsection{Hardening against Lateral Movement}
Fig. \ref{Fig_harden_PNNL} shows the effect of different hardening strategies proposed in Sec. \ref{sec_hardening} on the host-application graph. 
As shown in Fig. \ref{Fig_harden_PNNL} (a), the proposed greedy edge hardening strategies with and without score recalculation have similar performance in reachability reduction, and they outperform the greedy heuristic strategy that hardens edges of highest compromise probability. This suggest that the proposed edge hardening strategies indeed finds the nontrivial edges affecting lateral movement.  Fig. \ref{Fig_harden_PNNL} (b) shows that the node hardening strategies using the node score function $\rho$ and $\rho^{\bJ}$ lead to similar performance in reachability reduction, and  they outperform the greedy heuristic strategy that hardens nodes of lowest hardening level.  \textit{These results
	show that the greedy edge and node hardening approaches based on the proposed hardening matrix $\bJ$ outperform heuristics using the compromise probability matrix $\bP$ and the hardening level vector $\ba$, which	suggest that the intuition of hardening the host of lowest security level might not be the best strategy for constraining lateral movement, as it does not take into account the connectivity structure  of the host-application graph.}

\begin{figure}[t]
	\centering
	\begin{subfigure}[b]{0.24\textwidth}
		\includegraphics[width=\textwidth]{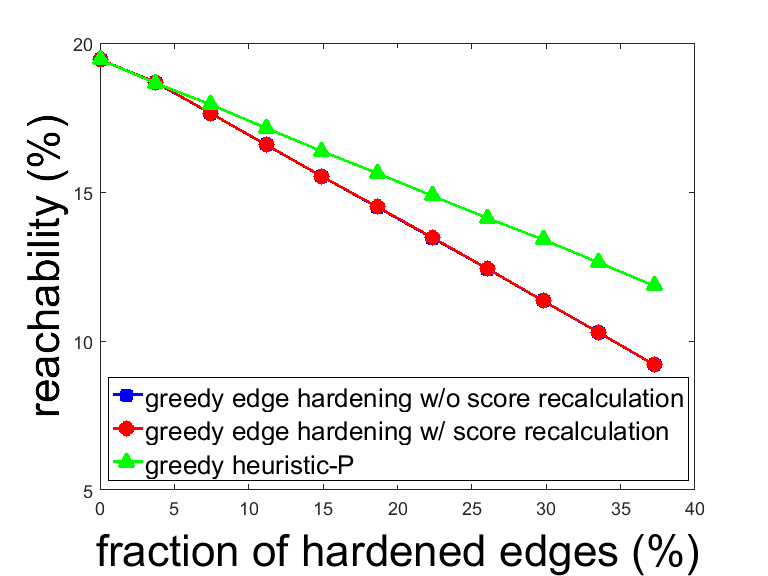}
		\caption{~}
		\label{Fig_edge_harden}
	\end{subfigure}%
	\centering
	\begin{subfigure}[b]{0.24\textwidth}
		\includegraphics[width=\textwidth]{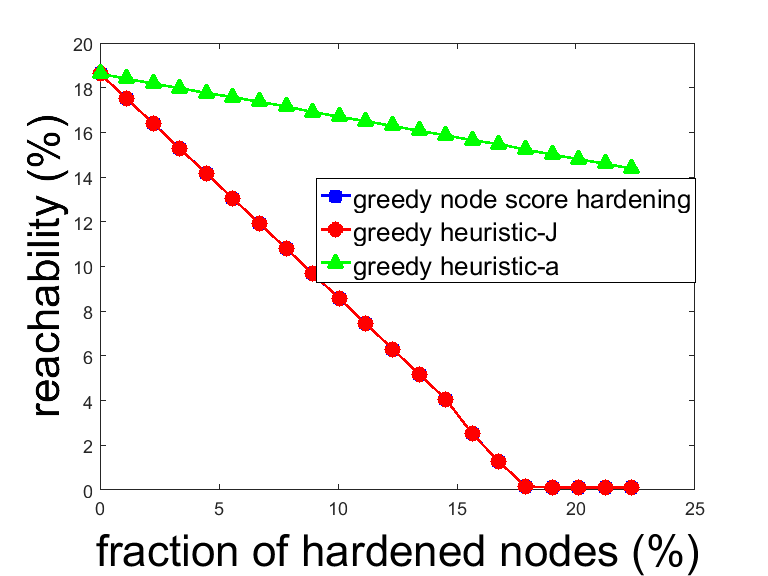}
		\caption{~}
		\label{Fig_node_harden}
	\end{subfigure}
	\vspace{-8mm}
	\caption{The effect of hardening on host-application graph. (a) Reachability with respect to different edge hardening strategies. (b) Reachability with respect to different node hardening strategies. The greedy hardening approaches based on the proposed hardening matrix $\bJ$ (red and blue curves) outperform heuristics using the compromise probability matrix $\bP$ and the hardening level vector $\ba$ (green curve).
		\label{Fig_harden_PNNL}
	}
	\vspace{-6mm}
\end{figure}

\subsection{Segmentation and Hardening on  Tripartite Graph}
Lastly, we investigate the joint effect of segmentation and hardening on constraining lateral movement attacks on the user-host-application tripartite graph. Fig. \ref{Fig_tripartite_PNNL} shows the lateral movement reachability under a selected combination of the proposed segmentation and hardening strategies.
Since these joint segmentation and hardening strategies lead to similar results in reachability reduction, we display their mean and  standard deviation.
In addition, for clarity we only plot representative points to demonstrate the effectiveness. It can be observed that
different combinations of the proposed strategies result in similar tendency in constraining lateral movements.
 Originally, more than half of hosts can be compromised if no preventative actions are taken. Nonetheless, \textit{the proposed segmentation and hardening strategies can greatly reduce the reachability of lateral movements to secure the network.}

\begin{figure}[]
	\centering
	\begin{subfigure}[b]{0.24\textwidth}
		\includegraphics[width=\textwidth]{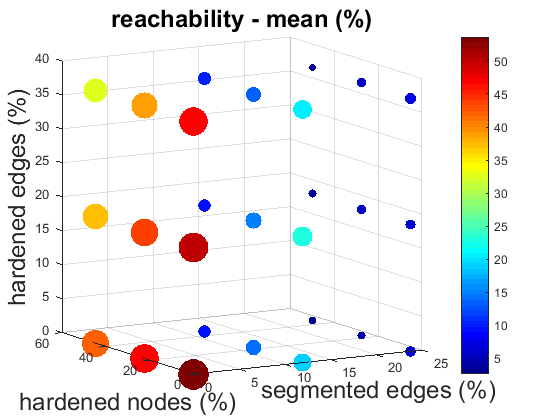}
		\caption{~}
		\label{Fig_tripartite_1_1_1}
	\end{subfigure}%
	\centering
	\begin{subfigure}[b]{0.24\textwidth}
		\includegraphics[width=\textwidth]{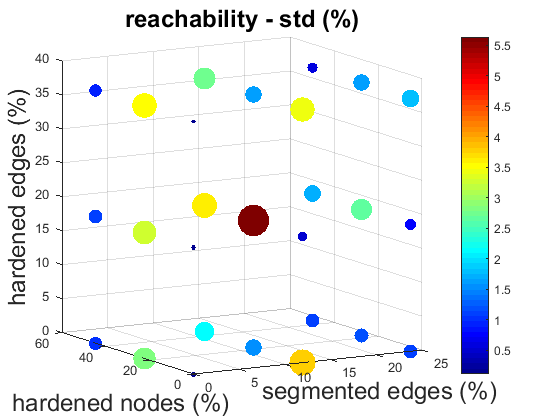}
		\caption{~}
		\label{Fig_tripartite_2_2_1}
	\end{subfigure}%
	\vspace{-4mm}	
	\caption{The effect of segmentation and hardening on lateral movement attack in user-host-application tripartite graph. 	This figure shows the mean and the standard deviation (std) of reachability of four joint segmentation and hardening strategies.
		The size and the color of a point in the plot reflects the level of reachability. 
	}
	\label{Fig_tripartite_PNNL}
	\vspace{-6mm}
\end{figure}

\section{Benchmark: Performance Evaluation on Actual Lateral Movement Attacks}
\label{sec_benchmark}

This section demonstrates the importance of incorporating the heterogeneity of a cyber system for enhancing the resilience to lateral movement attacks. Specifically, real lateral movement attacks taking place in an enterprise network are collected as a performance benchmark\footnote{Dataset available at https://sites.google.com/site/pinyuchenpage/datasets}.
This dataset contains the communication patterns between 2010 hosts via 2 communication protocols, and therefore the enterprise network can be summarized as a bipartite host-application graph. 
It also contains lateral movements originated from a single compromised host, and in total includes 2001 propagation paths. 
The details of the collected benchmark dataset are given in Appendix \ref{proof_bench}. 
The experiment in this section differs from the analysis in Sec. \ref{sec_expetiment_tripartite}, as this dataset contains actual lateral movement traces on the host-application graph, whereas in Sec. \ref{sec_expetiment_tripartite} we have a complete user-host-application tripartite graph of an enterprise, but without the actual attack traces.


We compare the performance of our proposed edge hardening method (Algorithm \ref{algo_greedy_edge_harden_score}) to the \text{NetMelt} algorithm \cite{tong2012gelling_short}, which is a well-known edge removal method for containing information diffusion on a homogeneous graph. For the proposed edge hardening method, the edges in the host-application bipartite graph are hardened sequentially according to the computed scores, and the initial compromise probability matrix $\bP$ is set to be a matrix of ones. For every propagation path, the lateral movement will be contained if the edge it attempts to leverage is hardened. 
Since \text{NetMelt} can only deal with homogeneous graphs (in this case, the host-host graph), its recommendation on hardening a host pair is equivalent to hardening $K$ corresponding host-application edges (in this case, $K=2$), whereas our method has better granularity for edge hardening by considering the connectivity structure of the host-application bipartite graph. The computation  complexity of NetMelt is $O(m \eta+N)$ \cite{tong2012gelling_short}, where $m$ is the number of edges in the host-host graph, $\eta$ is the number of hardened edges, and $N$ is the number of hosts. Since the operation of leading eigenpair computation in Algorithm \ref{algo_greedy_edge_harden_score} is similar to NetMelt, the computation complexity for Algorithm \ref{algo_greedy_edge_harden_score} without score recalculation is $O(m^\prime \eta+N)$, where $m^{\prime}$ is the number of nonzero entries in the matrix $\bJ$. For Algorithm \ref{algo_greedy_edge_harden_score} with score recalculation, the computation complexity is $O(m^\prime \eta^2+N\eta)$.

\begin{figure}[t]
	\centering
	\includegraphics[width=0.28\textwidth]{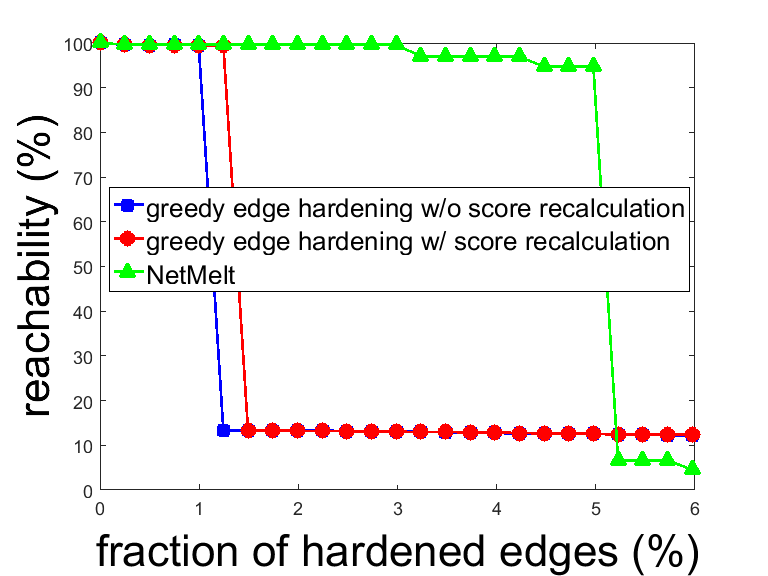}
	\vspace{-1mm}
	\caption{Performance evaluation on the collected benchmark dataset. The proposed approaches (blue and red curves) can restrain the reachability to roughly 10\% by hardening less than 1.5\% of edges, whereas  \text{NetMelt} (green) requires to harden more than 5\% of edges to achieve comparable reachability.} 
	\label{Fig_actual_path}
	\vspace{-6mm}
\end{figure}

Fig. \ref{Fig_actual_path} shows the reachability of lateral movements with respect to the fraction of hardened edges. Initially the reachability is nearly 100\%, suggesting that almost every host is vulnerable to lateral movement attacks without edge hardening. The proposed method (both with or without score recalculation) can restrain the reachability to roughly 10\% by hardening less than 1.5\% of edges, whereas  \text{NetMelt}  requires to harden more than 5\% of edges to achieve comparable reachability, since it does not exploit the heterogeneity of the cyber system. Consequently, the results demonstrate the utility of incorporating heterogeneity for building resilient systems.


\section{Conclusion and Future Work}
This paper developed a framework for joint modeling of multiple dimensions of cyber behavior (user access control, application traffic) for enhancing cyber enterprise resiliency in an unified, tripartite network model.  Our experiments performed on a real dataset demonstrate the value and powerful insights from this unified model with respect to analysis performed on a single dimensional dataset.  Through the tripartite graph model, the dominant factors affecting lateral movement are identified and effective algorithms are proposed to constrain the reachability with theoretical performance guarantees. We also synthesized a benchmark dataset containing traces of actual lateral movement attacks. The results showed that our proposed approach can effectively contain lateral movements by incorporating the heterogeneity of the cyber system. Our future work includes  generalization to $k$-partite networks to model other dimensions of behavior (e.g., authentication mechanisms and social profile of users).

\bibliographystyle{IEEEtran}
\bibliography{IEEEabrv20160824,CPY_ref_20161102_CNS}

\clearpage
\section*{ { \LARGE Supplementary File }}
\appendices
\section{Kronecker Product}
\label{subsec_Kronecker}
If $\bX_1$ is an $r_1 \times \ell_1$ matrix and   $\bX_2$ is an $r_2 \times \ell_2$ matrix, then the Kronecker product $\bX_1 \otimes \bX_2$ is  
an $r_1 r_2 \times \ell_1 \ell_2$ matrix defined as 
\begin{align}
\bX_1 \otimes \bX_2=
\begin{bmatrix}
[\bX]_{11} \bX_2& [\bX]_{12} \bX_2 & \dots  & [\bX]_{1 \ell_1} \bX_2 \\
[\bX]_{21} \bX_2& [\bX]_{22} \bX_2 & \dots  & [\bX]_{2 \ell_1} \bX_2 \\
\vdots & \vdots & \vdots & \vdots \\
[\bX]_{r_1 1} \bX_2& [\bX]_{r_1 2} \bX_2 & \dots  & [\bX]_{r_1 \ell_1} \bX_2
\end{bmatrix}.
\end{align}
Some useful properties of Kronecker product are
\begin{align}
\label{eqn_Kronecker_1}
\lb \bX_1 \otimes \bX_2 \rb^T &= \bX_1^T \otimes \bX_2^T; \\ 
\label{eqn_Kronecker_2}
\bX_1 \otimes \lb \bX_2 + \bX_3 \rb &= \bX_1 \otimes \bX_2 + \bX_1 \otimes \bX_3. 
\end{align}
If $\bX_1$ is an $r_1 \times \ell_1$ matrix, $\bX_2$ is an $r_2 \times \ell_2$ matrix, $\bX_3$ is an $\ell_1 \times \ell_3$ matrix, and   $\bX_4$ is an $\ell_2 \times \ell_4$ matrix,  then
\begin{align}
\label{eqn_Kronecker_3}
(\bX_1 \otimes \bX_2) \cdot (\bX_3 \otimes \bX_4)&=(\bX_1 \cdot \bX_3) \otimes (\bX_2 \cdot \bX_4).
\end{align}

\section{Proof of (\ref{eqn_cascade_user_host_3_short})}
\label{proof_eqn_cascade_user_host_3}
Following (\ref{eqn_cascade_user_host_2}),
\begin{align}
\label{eqn_cascade_user_host_3}
\br_{t+1}&=\TB \lb \sum_{h=1}^{t+1} \bw_{h} \rb  \nonumber \\
&=\TB \lb \sum_{h=1}^{t} \bw_{h} + \bw_{t+1} \rb  \nonumber \\
&\equiv \TB \lb \TB \lb \sum_{h=1}^{t} \bw_{h} \rb + \bw_{t+1} \rb  \nonumber \\
&\equiv	\TB \lb \br_t + \bB \br_t \rb.
\end{align}

\section{Proof of (\ref{eqn_propagation_1_short})}
\label{proof_eqn_propagation_1}
Following the definition of $\bw_1$, we have 
\begin{align}
\label{eqn_propagation_1}
[\bw_1]_j &= \sum_{i=1}^{N} [\br_0]_i [\bW]_{ij}  \nonumber \\
&=\sum_{i=1}^{N} [\br_0]_i \sum_{k=1}^K [\bA_k]_{ij} [\bP]_{kj}  \nonumber \\
&= \sum_{k=1}^K\sum_{i=1}^{N} [\br_0]_i  [\bA_k]_{ij} [\bP]_{kj}  \nonumber \\
&= \sum_{k=1}^K \br_0^T  \bA_k \be_j^N [\bP]_{kj}  \nonumber \\
&=  \br_0^T  \sum_{k=1}^K \lb [\bP]_{kj} \bA_k \rb \be_j^N. 
\end{align} 
Since 
$\sum_{k=1}^K \bP_{kj} \bA_k=\bA \cdot \Lb  \col_j(\bP) \otimes \bI_n \Rb$,
applying it to (\ref{eqn_propagation_1}) we have 
\begin{align}
\label{eqn_eqn_propagation_2}
[\bw_1]_j &= \br_0^T  \bA \Lb  \col_j(\bP) \otimes \bI_n \Rb \be_j \\
&= \be_j^T \Lb  \col_j(\bP)^T \otimes \bI_n \Rb \bA^T \br_0.  \nonumber
\end{align}

\section{Proof of (\ref{eqn_propagation_2_short})}
\label{proof_eqn_propagation_2}
Using   (\ref{eqn_Kronecker_1}) and (\ref{eqn_Kronecker_3}) gives
\begin{align}
\label{eqn_propagation_2}
\bw_1&= \begin{bmatrix}
\be_1^T & \bzero_N^T & \dots    &\bzero_N^T \\
\bzero_N^T & \be_2^T  & \bzero_N^T   &  \vdots \\
\vdots & \vdots & \vdots & \bzero_N^T  \\
\bzero_N^T  &\dots & \bzero_N^T  & \be_N^T
\end{bmatrix}
\cdot
\begin{bmatrix}
\col_1(\bP)^T \otimes \bI_n  \\
\col_2(\bP)^T \otimes \bI_n   \\
\vdots    \\
\col_n(\bP)^T \otimes \bI_n 
\end{bmatrix}
\cdot
\bA^T \br_0  \nonumber \\
&=\lb \bI_n \otimes \bone_N^T \rb \cdot \lb \bP^T \otimes \bI_n \rb \cdot \bA^T \br_0  \nonumber \\
&=\lb \bI_n \cdot \bP^T \rb \otimes \lb \bone_N^T \cdot  \bI_n \rb   \bA^T \br_0  \nonumber \\
&=\lb \bP^T \otimes \bone_N^T \rb \bA^T  \br_0  \nonumber \\
&=\lb \bP \otimes \bone_N \rb^T \bA^T  \br_0. \nonumber \\
\end{align}

\section{Proof of (\ref{eqn_casecade_host_app_3_short})}
\label{proof_eqn_casecade_host_app_3}
Following (\ref{eqn_casecade_host_app_2}),
\begin{align}
\label{eqn_casecade_host_app_3}
\br_{t+1}&= \HB_{\ba} \lb \TB \lb \sum_{h=1}^{t+1} \bw_h \rb \rb \nonumber \\
&=\HB_{\ba} \lb \TB \lb \sum_{h=1}^{t} \bw_h + \bw_{t+1} \rb \rb \nonumber \\
&\equiv 	\HB_{\ba} \lb \TB \lb \TB \lb \sum_{h=1}^{t} \bw_h \rb + \bw_{t+1} \rb  \rb 	\nonumber \\
&\equiv	\HB_{\ba} \lb \TB \lb \br_t + \lb \bP \otimes \bone_N \rb^T \bA^T \br_t \rb  \rb.
\end{align}

\section{Proof of (\ref{eqn_eqn_B_removed_LB_short})}
\label{proof_eqn_B_removed}
When a subset of edges $\cE_{\cR} \subset \cE$ are removed from $G_C$, the resulting adjacency matrix of $G_C \setminus \cE_{\cR}$ is 
\begin{align}
\bAt_C \lb \cE_{\cR} \rb=\bA_C- \sum_{(i,j) \in \cE_{\cR} }  \be^U_i {\be^N_j}^T.
\end{align}
Therefore, the corresponding induced adjacency matrix $\bBt \lb \cE_{\cR} \rb$ is 
\begin{align}
\label{eqn_B_removed}
\bBt \lb \cE_{\cR} \rb&= \bB- \sum_{(i,j) \in \cE_{\cR}}   {\be^N_j} {\be^U_i}^T  \bA_C - \sum_{(i,j) \in \cE_{\cR}} \bA_C^T \be^U_i {\be^N_j}^T \nonumber \\
&~~~+ \sum_{(i,j) \in \cE_{\cR}} \sum_{(\ell,s) \in \cE_{\cR}}   {\be^N_j} {\be^U_i}^T  \be^U_\ell {\be^N_s}^T  \nonumber \\
&=\bB- \sum_{(i,j) \in \cE_{\cR}}   {\be^N_j} {\be^U_i}^T  \bA_C - \sum_{(i,j) \in \cE_{\cR}} \bA_C^T \be^U_i {\be^N_j}^T  \nonumber \\
&~~~+  \sum_{i \in \cV_{user}} \sum_{j \in \cV_{host}, (i,j) \in \cE_{\cR}} \sum_{s \in \cV_{host}, (i,s) \in \cE_{\cR}}   {\be^N_j}  {\be^N_s}^T.
\end{align}

Recall that $\bu$ is the largest eigenvector of $\bB$. Left and right multiplying (\ref{eqn_B_removed}) by $\bu^T$ and $\bu$ and using the Courant-Fischer theorem \cite{HornMatrixAnalysis}, we have 
\begin{align}
\label{eqn_eqn_B_removed_LB_short_repli}
\lambda_{\max} \lb \bBt \lb \cE_{\cR} \rb \rb \geq  
\lambda_{\max}(\bB) - f(\cE_{\cR}),
\end{align}
where $f(\cE_\cR)$ is defined in (\ref{eqn_f_enterprise}).

\section{An equivalent expression of $f(\cE_{\cR})$}
\label{proof_lem_f_edge}
The following lemma provides an equivalent representation of the function $f(\cE_{\cR})$ in (\ref{eqn_eqn_B_removed_LB_short}), which also implies that $f(\cE_{\cR})$ is nonnegative as it can be represented by a sum of nonnegative terms.
\begin{lemma}
	\label{lem_f_edge}
	Let $\varnothing$ denote the empty set. Then $f(\varnothing) = 0$ and 
	\begin{align}
	f(\cE_{\cR})&=\sum_{i \in \cV_{user}} \sum_{j \in \cV_{host}, (i,j) \in \cE_{\cR}} \sum_{s \in \cV_{host}, (i,s) \in \cE_{\cR}}   [\bu]_j [\bu]_s \nonumber \\
	&~~~+2	 \sum_{i \in \cV_{user}} \sum_{j \in \cV_{host},(i,j) \in \cE_{\cR}} \sum_{s \in \cV_{host},(i,s) \in \cE / \cE_{\cR}}   [\bu]_j [\bu]_s.	\nonumber
	\end{align}
\end{lemma}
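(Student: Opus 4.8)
The plan is to start from the definition of $f(\cE_{\cR})$ in (\ref{eqn_f_enterprise}) and rewrite its linear (first) term so that its cross-product structure becomes explicit. The key observation is that $\bA_C^T \be_i^U$ is precisely the $i$-th row of $\bA_C$ viewed as a length-$N$ column vector, whose $s$-th entry equals $[\bA_C]_{is}$, and this equals $1$ exactly when $(i,s) \in \cE$. Consequently the inner product expands as
\begin{align}
\bu^T \bA_C^T \be_i^U = \sum_{s \in \cV_{host},\, (i,s) \in \cE} [\bu]_s, \nonumber
\end{align}
so that the first term of $f(\cE_{\cR})$ becomes $2\sum_{i}\sum_{j:(i,j)\in\cE_{\cR}}[\bu]_j\sum_{s:(i,s)\in\cE}[\bu]_s$, i.e.\ a double sum of products $[\bu]_j[\bu]_s$ ranging over $(i,j)\in\cE_{\cR}$ and over all host-neighbors $s$ of user $i$ in the full graph $\cE$.

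Next I would split the neighbor set $\{s : (i,s) \in \cE\}$ into the part lying inside the removal set, $\{s : (i,s) \in \cE_{\cR}\}$, and its complement $\{s : (i,s) \in \cE \setminus \cE_{\cR}\}$. This turns the linear term into twice a ``within-$\cE_{\cR}$'' triple sum plus twice an ``$\cE_{\cR}$-to-complement'' triple sum. Since the subtracted quadratic term in (\ref{eqn_f_enterprise}) is exactly the within-$\cE_{\cR}$ triple sum, subtracting it cancels one of the two copies of that sum, leaving a single within-$\cE_{\cR}$ triple sum plus twice the cross term. This is precisely the claimed identity. The case $f(\varnothing)=0$ is immediate, since both sums in (\ref{eqn_f_enterprise}) are then empty.

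The nonnegativity assertion then follows with no further work: by the Perron--Frobenius argument invoked earlier in the excerpt, every entry of the leading eigenvector $\bu$ of the PSD matrix $\bB$ is nonnegative, so each product $[\bu]_j[\bu]_s$ is nonnegative, and $f(\cE_{\cR})$ is exhibited as a nonnegative combination of such products.

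The main (and essentially the only) obstacle is bookkeeping: keeping the three index sets aligned --- the outer user index $i$ and the two host indices $j,s$, each constrained by membership of $(i,j)$ and $(i,s)$ in $\cE_{\cR}$ or in $\cE$ --- when splitting and recombining the sums. There is no spectral or analytic difficulty here; the content is entirely a careful rearrangement of finite sums. I would therefore double-check the factor of $2$ and the cancellation of the quadratic term, as that is the one place where a sign or counting error could slip in.
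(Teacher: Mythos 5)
Your proposal is correct and follows essentially the same route as the paper's own proof: expanding $\bu^T \bA_C^T \be_i^U = \sum_{s \in \cV_{host},\,(i,s)\in\cE}[\bu]_s$ via the adjacency structure, splitting the host-neighbor sum into the parts inside $\cE_{\cR}$ and inside $\cE \setminus \cE_{\cR}$, and cancelling one copy of the within-$\cE_{\cR}$ triple sum against the subtracted quadratic term, with nonnegativity then following from the Perron--Frobenius nonnegativity of $\bu$. The bookkeeping you flag (the factor of $2$ and the cancellation) works out exactly as you describe, matching the paper's derivation term for term.
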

\begin{proof}
	$f(\varnothing) = 0$ is a direct result from the definition of $f(\cE_{\cR})$. Moreover, $f(\cE_{\cR})$
	has an equivalent expression that
	\begin{align}
	f(\cE_{\cR})&=2 \sum_{(i,j) \in \cE_{\cR}} \bu^T \bA_C^T \be^U_i [\bu]_j  \nonumber \\
	&~~~-   \sum_{i \in \cV_{user}} \sum_{ \in \cV_{host},(i,j) \in \cE_{\cR}} \sum_{s \in \cV_{host},(i,s) \in \cE_{\cR}}   [\bu]_j [\bu]_s  \nonumber \\
	&=  2 \sum_{(i,j) \in \cE_{\cR}}  \sum_{s \in \cV_{host}} [\bA_C]_{i s} [\bu]_s  [\bu]_j \nonumber \\   
	&~~~-   \sum_{i \in \cV_{user}} \sum_{ j \in \cV_{host},(i,j) \in \cE_{\cR}} \sum_{s \in \cV_{host},(i,s) \in \cE_{\cR}}   [\bu]_j [\bu]_s  \nonumber \\
	&= 2 \sum_{i \in \cV_{user}}  \sum_{j \in \cV_{host},(i,j) \in \cE_{\cR}}  \lb \sum_{s \in \cV_{host},(i,s) \in \cE_{\cR}} \right. \nonumber \\
	&~~~\left.+ \sum_{s \in \cV_{host},(i,s) \in \cE/ \cE_{\cR}}  \rb [\bu]_j  [\bu]_s  \nonumber \\
	&~~~-   \sum_{i \in \cV} \sum_{ \in \cV,(i,j) \in \cE_{\cR}} \sum_{s \in \cV,(i,s) \in \cE_{\cR}}   [\bu]_j [\bu]_s    \nonumber \\
	&=\sum_{i \in \cV} \sum_{j \in \cV,(i,j) \in \cE_{\cR}} \sum_{s \in \cV,(i,s) \in \cE_{\cR}}   [\bu]_j [\bu]_s \nonumber \\
	&~~~+2	 \sum_{i \in \cV} \sum_{j \in \cV,(i,j) \in \cE_{\cR}} \sum_{s \in \cV,(i,s) \in \cE / \cE_{\cR}}   [\bu]_j [\bu]_s.	
	\end{align}
	The nonnegativity  of $\bu$ suggests that $f(\cE_{\cR})\geq 0$.
\end{proof}

\section{Proof of Lemma \ref{lem_bound_f}}
\label{proof_lem_bound_f}
For any edge removal set $\cE_{\cR} \subset \cE$ with $|\cE_{\cR}|=q$, let $\bv$ be the largest eigenvector of $\bBt \lb \cE_{\cR} \rb$.
Left and right multiplying (\ref{eqn_B_removed}) by $\bv^T$ and $\bv$ gives
\begin{align}
\label{eqn_bound_f_1}
\lambda_{\max} \lb \bBt \lb \cE_{\cR} \rb \rb & =	\bv^T \bB \bv - g(\cE_{\cR})  \nonumber \\
& \leq 		\lambda_{\max}(\bB) - g(\cE_{\cR})
\end{align}
by the Courant-Fischer theorem \cite{HornMatrixAnalysis}, where 
$g(\cE_{\cR})=\sum_{i \in \cV_{user}} \sum_{j \in \cV_{host},(i,j) \in \cE_{\cR}} \sum_{s \in \cV_{host},(i,s) \in \cE_{\cR}}   [\bv]_j [\bv]_s$\\
$+2	 \sum_{i \in \cV_{user}} \sum_{j \in \cV_{host},(i,j) \in \cE_{\cR}} \sum_{s \in \cV_{host},(i,s) \in \cE / \cE_{\cR}}   [\bv]_j [\bv]_s$
is obtained by following the same derivation procedure as in Lemma \ref{lem_f_edge}. 

Next, recall from the Perron-Frobenius theorem \cite{HornMatrixAnalysis} that the entries of $\bu$ and $\bv$ are all nonnegative and bounded. 
Therefore, there must exist one edge removal set $\cE_{\cR}$ with $|\cE_{\cR}|=q$ such that $g(\cE_{\cR})>0$. Otherwise  $g(\cE_{\cR})=0$ for every edge removal set with cardinality $|\cE_{\cR}|=q \geq 1$ implies that $\bv$ is a zero vector, which contradicts the fact that $\bv$ is an eigenvector.
Finally, since $f(\cE_{\cR})>0$,
there exists a constant $c>0$ such that $g(\cE_{\cR}) \geq c \cdot f(\cE_{\cR})$. Applying this inequality to (\ref{eqn_bound_f_1}) gives $	\lambda_{\max}  \lb \bBt \lb \cE_{\cR} \rb \rb  \leq 	\lambda_{\max}(\bB) - c \cdot f(\cE_{\cR})$. 

\section{Monotonicity of $f(\cE_{\cR})$}
\label{proof_lem_monotone_f}
\begin{lemma}
	\label{lem_monotone_f}
	$f(\cE_{\cR})$ is a monotonic increasing set function.
\end{lemma}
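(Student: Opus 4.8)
The plan is to prove monotonicity by the standard reduction to single-edge additions: a set function is monotone increasing precisely when all of its discrete derivatives are nonnegative, so it suffices to show that $f(\cE_{\cR} \cup \{(a,b)\}) \geq f(\cE_{\cR})$ for an arbitrary $\cE_{\cR} \subset \cE$ and any edge $(a,b) \in \cE \setminus \cE_{\cR}$. Chaining such single additions then yields $f(\cE_{\cR2}) \geq f(\cE_{\cR1})$ whenever $\cE_{\cR1} \subset \cE_{\cR2}$.

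To compute the discrete derivative I would work directly from the definition (\ref{eqn_f_enterprise}), first rewriting the linear part via the identity $\bu^T \bA_C^T \be_i^U = \sum_{s : (i,s) \in \cE} [\bu]_s$, i.e.\ the $\bu$-weighted size of user $i$'s host-neighborhood in $\cE$. Adding the edge $(a,b)$ increases the first (linear) sum by exactly $2[\bu]_b \sum_{s:(a,s)\in\cE}[\bu]_s$, since only the new index $(a,b)$ enters. For the second (quadratic, subtracted) term, note that for fixed $i$ the inner double sum equals $\big(\sum_{j:(i,j)\in\cE_{\cR}}[\bu]_j\big)^2$, so the only affected user is $i=a$, whose $\cE_{\cR}$-host-neighborhood grows from $S_a = \{s:(a,s)\in\cE_{\cR}\}$ to $S_a \cup \{b\}$; writing $T = \sum_{s\in S_a}[\bu]_s$, that term increases by $(T+[\bu]_b)^2 - T^2 = 2[\bu]_b T + [\bu]_b^2$.

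Combining the two contributions and splitting the full $\cE$-neighborhood of $a$ into its $\cE_{\cR}$ and $\cE\setminus\cE_{\cR}$ parts, the $T$-terms cancel and I would obtain
\[
\Delta f\big((a,b)\mid\cE_{\cR}\big) = 2[\bu]_b \sum_{s:(a,s)\in\cE\setminus\cE_{\cR}}[\bu]_s - [\bu]_b^2 .
\]
The decisive observation is that $b$ itself lies in the complement neighborhood $\{s:(a,s)\in\cE\setminus\cE_{\cR}\}$, because $(a,b)\in\cE\setminus\cE_{\cR}$; peeling off the $s=b$ summand therefore rewrites this as $[\bu]_b^2 + 2[\bu]_b\sum_{s\ne b,\,(a,s)\in\cE\setminus\cE_{\cR}}[\bu]_s$, a sum of nonnegative terms by the Perron--Frobenius nonnegativity of $\bu$ already invoked in Sec.~\ref{sec_segmentation}. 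Hence $\Delta f \geq 0$ and the claimed monotonicity follows.

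I expect the main obstacle to be the bookkeeping in the quadratic term together with this cancellation: one must isolate the single diagonal contribution $[\bu]_b^2$ and recognize that it is absorbed precisely because $b$ is already counted in the complement neighborhood before the edge is added. Getting that cancellation right --- rather than, say, mishandling the ordered pairs $(j,s)$ with $j=b$ or $s=b$ in the squared sum --- is where an argument of this type most easily goes wrong; everything else reduces to expanding a square and invoking nonnegativity of $\bu$.
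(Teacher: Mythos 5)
Your proof is correct, but it takes a genuinely different route from the paper's. The paper proves monotonicity for an arbitrary nested pair $\cE_{\cR1} \subset \cE_{\cR2}$ in one batch computation: it invokes the equivalent representation of $f$ as a sum of nonnegative terms (Lemma \ref{lem_f_edge}), expands both $f(\cE_{\cR2})$ and $f(\cE_{\cR1})$ by splitting every sum over $\cE_{\cR1}$ and $\Delta \cE_{\cR} = \cE_{\cR2} \setminus \cE_{\cR1}$, and lower-bounds the difference term by term using the nonnegativity of $\bu$. You instead reduce to single-edge additions and chain, computing a closed form for the discrete derivative directly from the definition (\ref{eqn_f_enterprise}); your key steps --- the identity $\bu^T \bA_C^T \be_a^U = \sum_{s:(a,s)\in\cE}[\bu]_s$, the rewriting of the quadratic term as $\sum_i \bigl(\sum_{j:(i,j)\in\cE_{\cR}}[\bu]_j\bigr)^2$, the cancellation of the $T$-terms, and the absorption of the diagonal $[\bu]_b^2$ into the complement-neighborhood sum --- are all handled correctly, and the final nonnegativity appeal to Perron--Frobenius is the same as the paper's. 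Your route is arguably more economical: the formula $\Delta f\bigl((a,b)\mid\cE_{\cR}\bigr) = [\bu]_b^2 + 2[\bu]_b\sum_{s\ne b,\,(a,s)\in\cE\setminus\cE_{\cR}}[\bu]_s$ is essentially the same quantity the paper derives separately inside its proof of Theorem \ref{thm_submodularity_edge}, so a single computation yields monotonicity now and submodularity nearly for free, since the complement-neighborhood sum can only shrink as $\cE_{\cR}$ grows. What the paper's batch argument buys in exchange is that it needs no chaining over intermediate sets and directly reuses Lemma \ref{lem_f_edge}, which it had already established to show $f(\cE_{\cR}) \geq 0$.
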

\begin{proof}
	For any two subsets  $\cE_{\cR1},\cE_{\cR2} \subset \cE$ satisfying $\cE_{\cR1} \subset \cE_{\cR2}$, let $\Delta \cE_{\cR}= \cE_{\cR2} / \cE_{\cR1}$. 	
	Using the relation $ \cE_{\cR2}=\cE_{\cR1} \cup \Delta \cE_{\cR}$ and $\cE_{\cR1} \cap \Delta \cE_{\cR} = \varnothing$,
	from Lemma \ref{lem_f_edge} 	$f(\cE_{\cR2})$ can be represented as
	\begin{align}
	f(\cE_{\cR2})&=\sum_{i \in \cV_{user}} \lb \sum_{j \in \cV_{host},(i,j) \in \cE_{\cR1}} + \sum_{j \in \cV_{host},(i,j) \in \Delta \cE_{\cR}} \rb \nonumber \\
	&~~~\cdot
	 \lb \sum_{s \in \cV_{host},(i,s) \in \cE_{\cR1}} + \sum_{s \in \cV_{host},(i,s) \in \Delta \cE_{\cR} }  \rb  [\bu]_j [\bu]_s  \nonumber \\
	&~+2	 \sum_{i \in \cV_{user}} \lb \sum_{j \in \cV_{host},(i,j) \in \cE_{\cR1}} + \sum_{j \in \cV_{host},(i,j) \in \Delta \cE_{\cR}} \rb \nonumber \\
	&~~~\cdot \sum_{s \in \cV_{host},(i,s) \in \cE \setminus \cE_{\cR}}   [\bu]_j [\bu]_s. 	
	\end{align}
	Similarly, using the relation $ \Delta \cE_{\cR} = (\cE \setminus \cE_{\cR1}) \setminus (\cE \setminus \cE_{\cR2})$, from Lemma \ref{lem_f_edge} we have
	\begin{align}
	f(\cE_{\cR1})&=\sum_{i \in \cV_{user}} \sum_{j \in \cV_{host},(i,j) \in \cE_{\cR1}} \sum_{s \in \cV_{host},(i,s) \in \cE_{\cR1}}   [\bu]_j [\bu]_s  \nonumber \\
	&~+2 \sum_{i \in \cV_{user}} \sum_{j \in \cV_{host},(i,j) \in \cE_{\cR1}} \nonumber \\
	&~~~\cdot \lb \sum_{s \in \cV_{host},(i,s) \in \cE \setminus \cE_{\cR2}} + \sum_{s \in \cV_{host},(i,s) \in \cE \setminus \Delta \cE_{\cR}}\rb 
	[\bu]_j [\bu]_s.  
	\end{align}
	Therefore,
	\begin{align}
	&f(\cE_{\cR2})-	f(\cE_{\cR1}) \nonumber \\ 
	&= \sum_{i \in \cV_{user}}  \sum_{j \in \cV_{host},(i,j) \in \Delta \cE_{\cR}} 
	\sum_{s \in \cV_{host},(i,s) \in \cE_{\cR2} }    [\bu]_j [\bu]_s   \nonumber \\
	&~+2	 \sum_{i_{user} \in \cV} \sum_{j \in \cV_{host},(i,j) \in \Delta \cE_{\cR}}  \sum_{s \in \cV_{host},(i,s) \in \cE \setminus  \cE_{\cR2}}  [\bu]_j [\bu]_s  \nonumber \\
	\label{eqn_monotone_f_1}
	&~-	 \sum_{i \in \cV_{user}} \sum_{j \in \cV_{host},(i,j) \in \cE_{\cR1}}  \sum_{s \in \cV_{host},(i,s) \in \Delta \cE_{\cR}}  [\bu]_j [\bu]_s 
	\nonumber \\
	&\geq \sum_{i \in \cV_{user}}  \sum_{j \in \cV_{host},(i,j) \in \Delta \cE_{\cR}} 
	\sum_{s \in \cV_{host},(i,s) \in \cE_{\cR1} }    [\bu]_j [\bu]_s  \\
	&~+2	 \sum_{i \in \cV_{user}} \sum_{j \in \cV_{host},(i,j) \in \Delta \cE_{\cR}}  \sum_{s \in \cV_{host},(i,s) \in \cE \setminus  \cE_{\cR2}}  [\bu]_j [\bu]_s  \nonumber \\
	&~-	 \sum_{i \in \cV_{user}} \sum_{j \in \cV_{host},(i,j) \in \cE_{\cR1}}  \sum_{s \in \cV_{host},(i,s) \in \Delta \cE_{\cR}}  [\bu]_j [\bu]_s  
	\nonumber \\ 					
	&=2	 \sum_{i \in \cV_{user}} \sum_{j \in \cV_{host},(i,j) \in \Delta \cE_{\cR}}  \sum_{s \in \cV_{host},(i,s) \in \cE \setminus  \cE_{\cR2}}  [\bu]_j [\bu]_s  
	\\
	\label{eqn_monotone_f_4}
	& \geq 0,		
	\end{align}
	where the inequality in (\ref{eqn_monotone_f_1}) uses the Perron-Frobenious theorem \cite{HornMatrixAnalysis} that $[\bu]_s \geq 0$ and the fact that $\cE_{\cR1} \subset \cE_{\cR2}$. 	The inequality in (\ref{eqn_monotone_f_4}) is due to the nonnegativity of the largest eigenvector $\bu$.
\end{proof}
\section{Proof of Theorem \ref{thm_submodularity_edge}}
\label{proof_thm_submodularity_edge}
It has been proved in Lemma \ref{lem_monotone_f} that 	$f(\cE_{\cR})$ is a monotone increasing set function. Here we prove that $f(\cE_{\cR})$ is submodular. 
For any $\cE_{\cR1} \subset \cE_{\cR2} \subset \cE$ and 
$ e \in \cE \setminus \cE_{\cR2}$,	let $e=(u,v) \in \cE$, from (\ref{eqn_monotone_f_1}) we have 
\begin{align}
&\Delta f(e|\cE_{\cR2}) = f(\cE_{\cR2} \cup e)-f(\cE_{\cR2})  \nonumber \\
&= \sum_{i \in \cV_{user}}  \sum_{j \in \cV_{user},(i,j) = e} 
\lb \sum_{s \in \cV_{host},(i,s) \in \cE_{\cR2} }   + \sum_{s \in \cV_{host},(i,s)=e }\rb \nonumber \\
&~~~\cdot  [\bu]_j [\bu]_s   \nonumber \\
&~+2	 \sum_{i \in \cV_{user}} \sum_{j \in \cV_{host},(i,j) = e}  \sum_{s \in \cV_{host},(i,s) \in \cE \setminus ( \cE_{\cR2} \cup e )}  [\bu]_j [\bu]_s  \nonumber \\
&~-	 \sum_{i \in \cV_{user}} \sum_{j \in \cV_{host},(i,j) \in \cE_{\cR2}}  \sum_{s \in \cV_{host},(i,s) =e}  [\bu]_j [\bu]_s  \nonumber \\
&=[\bu]_u [\bu]_v + 2  \sum_{s \in \cV_{host},(u,s) \in \cE \setminus ( \cE_{\cR2} \cup e )}  [\bu]_u [\bu]_s \nonumber \\
& \leq [\bu]_u [\bu]_v + 2  \sum_{s \in \cV_{host},(u,s) \in \cE \setminus ( \cE_{\cR1} \cup e )}  [\bu]_u [\bu]_s \\
\label{eqn_submodular_f_3}
&=  \Delta f(e|\cE_{\cR1}),
\end{align}
where the inequality in (\ref{eqn_submodular_f_3}) holds due to the fact that $\cE \setminus ( \cE_{\cR2} \cup e ) \subset \cE \setminus ( \cE_{\cR1} \cup e )$ and the entries of $\bu$ are nonnegative from the Perron-Frobenious theorem \cite{HornMatrixAnalysis}. Therefore, $f(\cE_{\cR})$ is a monotone submodular set function.


\section{Proof of Theorem \ref{thm_bound_f}}
\label{proof_thm_bound_f}
Let $\cE_{\cR}^s$ with $|\cE_{\cR}^s|=s$ be the greedy edge removal set obtained from Algorithm \ref{algo_greedy_seg_score}. 
By submodularity of $f(\cE_{\cR})$ from  Theorem \ref{thm_submodularity_edge}, for every $s < q$ there exists an edge $e \in \cE_{\cR}^{opt} / \cE_{\cR}^{s}$  such that
\begin{align}
f(\cE_{\cR}^{s} \cup e)-f(\cE_{\cR}^{s}) \geq \frac{1}{q} \left( f(\cE_{\cR}^{opt})-f(\cE_{\cR}^{s}) \right).
\end{align}
After algebraic manipulation, we have
\begin{align}
\label{eqn_thm_bound_f_1}
f(\cE_{\cR}^{opt})-f(\cE_{\cR}^{s+1}) \leq \left(1-\frac{1}{q}\right) \lb f(\cE_{\cR}^{opt})-f(\cE_{\cR}^{s}) \rb
\end{align}
and therefore by telescoping (\ref{eqn_thm_bound_f_1}) we have
\begin{align}
\label{eqn_thm_bound_f_2}
f(\cE_{\cR}^{opt})-	f(\cE_{\cR}^{q}) \leq \left(1-\frac{1}{q}\right)^q 	f(\cE_{\cR}^{opt}) \leq \frac{1}{\mathtt{e}} 	f(\cE_{\cR}^{opt}).
\end{align}
Applying (\ref{eqn_thm_bound_f_2}) and the fact that $0<f(\cE_{\cR}^{q}) \leq f(\cE_{\cR}^{opt})$ to (\ref{eqn_eqn_B_removed_LB_short}), there exists some constant $c>0$ such that
\begin{align}
&\lambda_{\max}(\bB) - c  \lb 1- \mathtt{e}^{-1}\rb \cdot f(\cE_{\cR}^{opt}) \geq	\lambda_{\max} \lb \bBt \lb \cE_{\cR}^q \rb \rb; \\
&\lambda_{\max} \lb \bBt \lb \cE_{\cR}^q \rb \rb \geq  	\lambda_{\max}(\bB) - f(\cE_{\cR}^{opt}).   
\end{align}
The proof is complete by setting $c^\prime=c  \lb 1- \mathtt{e}^{-1}\rb$.

\section{Proof of Corollary \ref{cor_greedy_seg}}
\label{proof_cor_greedy_seg}
%
This corollary is a direct result of Lemma \ref{lem_bound_f} and Theorem \ref{thm_bound_f} by replacing $\bB$ with $\bBt \lb \cE_{\cR} \rb$ and setting $q=1$.

\section{Proof of Theorem \ref{thm_greedy_user_host}}
\label{proof_thm_greedy_user_host}
We use the fact from the Perron-Frobenius theorem that if a square matrix $\bX$ is irreducible and nonnegative, then 
$ \lambda_{\max}(\bX) \leq \max_{s} \sum_t [\bX]_{st}$. A square nonnegative matrix $\bX$ is irreducible means that 
for every pair of indices $s$ and $t$, 
there exists a natural number $z$ such that $[\bX^z]_{st}>0$.   
Since $\bBt(i,j)$ is a matrix of nonnegative entries, if $\bBt(i,j)$ is irreducible, from (\ref{eqn_B_one_edge_removal}) we have
\begin{align}
\lambda_{\max} \lb \bBt(i,j) \rb &\leq \max_{s \in \{1,2,\ldots,N\}} \Lb \bBt(i,j) \bone_N \Rb_s   \nonumber \\		
&=\max_{s \in \{1,2,\ldots,N\}} \Lb \bB \bone_N -\bA_C^T \be^U_i - [\bd^U]_i \be^N_j  + {\be^N_j}  \Rb_s \nonumber 	 \\
\label{eqn_thm_greedy_user_host_3}
&\leq d_{\max}^{user} \cdot d_{\max}^{host} \nonumber \\ 
&~~~- \max_{s \in \{1,2,\ldots,N\}} \Lb  \lb [\bd^U]_i-1 \rb \be_j^N-\bA_C^T \be_i^U \Rb_s,  	 
\end{align}
where (\ref{eqn_thm_greedy_user_host_3}) uses the fact that for all $t \in \{1,2,\ldots,N\}$,
\begin{align}
[\bB \bone_N]_t=[\bA_C^T \bA_C \bone_N]_t =	[\bA_C^T \bd_U]_t \leq 	 d_{\max}^{user} \cdot d_{\max}^{host}. 
\end{align}
\begin{rem}
	If $\bBt(i,j)$ is reducible, one can obtain a similar upper bound as in  Theorem \ref{thm_greedy_user_host} since the largest eigenvalue of $\bBt(i,j)$ is the maximum value of the largest eigenvalue of block-wise irreducible nonnegative submatrices of $\bBt(i,j)$. 
\end{rem}

\section{Proof of (\ref{eqn_J_LB_short})}
\label{proof_eqn_J_LB}
By the Courant-Fischer theorem \cite{HornMatrixAnalysis}, (\ref{eqn_Kronecker_1}) and (\ref{eqn_Kronecker_2}) we have
\begin{align}
\label{eqn_J_LB}
\lambda_{\max} \lb \bJt (\cH) \rb &\geq \by^T \bJt(\cH) \by  \nonumber \\
&=\by^T (\bPt_{\cH} \otimes \bone_N)^T \bA^T \by \nonumber \\
&=\lambda_{\max} \lb \bJ  \rb - \by^T \Delta \bJ_{\cH} \by,
\end{align}
where
\begin{align}
\label{eqn_P_hardening}
\Delta \bJ_{\cH}=\Lb \lb \sum_{(k,j) \in \cH}
\lb [\bP]_{kj}-\epsilon_{kj}  \rb \be_k^K {\be_j^N}^T \rb \otimes \bone_N   \Rb^T \bA^T.
\end{align}

\section{Monotonicity of $\phi(\cH)$}
\label{proof_lem_phi_monotone}
\begin{lemma}
	\label{lem_phi_monotone}
	$\phi(\varnothing)=0$ and $\phi(\cH)$ is a monotonic increasing set function.
\end{lemma}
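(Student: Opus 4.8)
The plan is to exploit the fact that $\phi$ is in fact a \emph{modular} (additive) set function, which reduces both claims to showing that the per-edge score is nonnegative.

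First, $\phi(\varnothing)=0$ is immediate: the sum defining $\Delta\bJ_{\cH}$ in (\ref{eqn_P_hardening}) is empty when $\cH=\varnothing$, so $\Delta\bJ_{\varnothing}=\bzero$ and hence $\phi(\varnothing)=\by^T\bzero\,\by=0$. Next I would observe that $\Delta\bJ_{\cH}$ depends linearly on the index set $\cH$: since (\ref{eqn_P_hardening}) sums over $(k,j)\in\cH$ and the transpose and Kronecker product are linear by (\ref{eqn_Kronecker_1}) and (\ref{eqn_Kronecker_2}), we get $\Delta\bJ_{\cH}=\sum_{(k,j)\in\cH}\Delta\bJ_{(k,j)}$, where $\Delta\bJ_{(k,j)}$ denotes the single-edge version. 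Because the same fixed eigenvector $\by$ of $\bJ$ is used throughout, this yields $\phi(\cH)=\sum_{(k,j)\in\cH}\phi((k,j))$ with $\phi((k,j))=\by^T\Delta\bJ_{(k,j)}\by$; that is, $\phi$ is modular. Monotonicity then follows at once provided each per-edge term is nonnegative, since for $\cH_1\subset\cH_2$ we have $\phi(\cH_2)-\phi(\cH_1)=\sum_{(k,j)\in\cH_2\setminus\cH_1}\phi((k,j))$.

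The crux is therefore to show $\phi((k,j))\ge 0$ for every edge, which I would do by unpacking the single-edge matrix explicitly. Writing $c_{kj}=[\bP]_{kj}-\epsilon_{kj}>0$, the factor $(\be_k^K{\be_j^N}^T)\otimes\bone_N$ has exactly one nonzero column, the $j$-th, whose $k$-th length-$N$ block equals $\bone_N$. Transposing and right-multiplying by $\bA^T$, which stacks the blocks $\bA_1^T,\ldots,\bA_K^T$, collapses this to a matrix whose only nonzero row is row $j$, equal to $\bone_N^T\bA_k^T=(\bA_k\bone_N)^T$. Hence $\Delta\bJ_{(k,j)}=c_{kj}\,\be_j^N(\bA_k\bone_N)^T$, a matrix with nonnegative entries because $\bA_k$ is binary and $c_{kj}>0$. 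It follows that $\phi((k,j))=c_{kj}\,[\by]_j\,(\bA_k\bone_N)^T\by\ge 0$, since $\by$ has nonnegative entries by the Perron-Frobenius theorem applied to the nonnegative matrix $\bJ$ (as already noted in Sec. \ref{sec_hardening}).

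The main obstacle is the Kronecker bookkeeping in the last step: correctly identifying $\Delta\bJ_{(k,j)}$ as the rank-one nonnegative matrix $c_{kj}\,\be_j^N(\bA_k\bone_N)^T$. Once that explicit form is in hand, the nonnegativity of $\by$ closes the per-edge claim, and the modularity established above delivers both $\phi(\varnothing)=0$ and monotonicity.
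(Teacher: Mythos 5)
Your proposal is correct and follows essentially the same route as the paper: both arguments rest on $\Delta\bJ_{\varnothing}=\bzero$, the additivity of $\Delta\bJ_{\cH}$ over the elements of $\cH$, and the entrywise nonnegativity of the difference matrix paired with the Perron--Frobenius nonnegativity of $\by$. The only difference is one of granularity: the paper observes that $\Delta\bJ_{\cH_2\setminus\cH_1}$ is a nonnegative matrix directly from its definition as a sum of products of nonnegative factors, whereas you verify the same fact via the explicit rank-one form $\Delta\bJ_{(k,j)}=([\bP]_{kj}-\epsilon_{kj})\,\be_j^N(\bA_k\bone_N)^T$, a correct but not strictly necessary refinement.
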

\begin{proof}
	By definition $\phi(\varnothing)=0$ since $\Delta \bJ_{\varnothing}$ is a zero matrix. For any two sets $\cH_1 \subset \cH_2 \subset \cV_{app} \times \cV_{mac}$, 
	\begin{align}
	\label{eqn_lem_phi_monotone_1}
	\phi(\cH_2)-\phi(\cH_1)&= \by^T \lb  \Delta \bJ_{\cH_2}- \Delta \bJ_{\cH_1} \rb \by  \nonumber \\
	&= \by^T \lb  \Delta \bJ_{\cH_2 \setminus \cH_1} \rb \by  \nonumber \\
	& \geq 0 
	\end{align}
	since $\Delta \bJ_{\cH_2 \setminus \cH_1}$ is a nonnegative matrix and $\by$ is a nonnegative vector by the Perron-Frobenious theorem \cite{HornMatrixAnalysis}.
	Therefore, $\phi(\cH)$ is a monotonic increasing set function.
\end{proof}

\section{Efficient update of step 5 in Algorithm \ref{algo_greedy_edge_harden_score} with score recalculation}	
\label{proof_efficient_update}
Using the notations in Algorithm \ref{algo_greedy_edge_harden_score}, when hardening the edge $(k^*,j^*)$ the entry  $[\bP^{\eta}]_{k^*j^*}$ changes to $\epsilon_{k^*j^*}$. Let the original value of $[\bP^{\eta}]_{k^*j^*}$ before hardening be $\psi$. Then the update in step 5 is equivalent to 
\begin{align}
\bJ^{old}=\bJ^{old}-\bH^T \bA^T,
\end{align}
where $\bH=[\bzero, \ldots, \bh,\ldots,\bzero]$ is a matrix of zeros except that 
the  $[(k^*-1) \cdot N+1]$-th to $(k^* \cdot N)$-th entry of $\bH$'s $j^*$-th column $\bh$
is $\psi-\epsilon_{k^*j^*}$.

\section{Proof of Theorem \ref{thm_harden_wo_recal}}
\label{proof_thm_harden_wo_recal}
For any two hardening sets $\cH_1$ and $\cH_2$ satisfying $\cH_1 \subset \cH_2 \subset \cV_{app} \times \cV_{host}$, using (\ref{eqn_P_hardening}) and (\ref{eqn_lem_phi_monotone_1}) we have the additivity for the score function $\phi(\cH)$ as
\begin{align}
\label{eqn_thm_harden_wo_recal_1}
\phi(\cH_2)=\phi(\cH_1)+\phi(\cH_2 \setminus \cH_1).
\end{align}
For any hardening set $\cH$ of cardinality $|\cH|=\eta \geq 1$, let $\cH=\{ H_s \}_{s=1}^{\eta}$, where  $H_s$ is the $s$-th element in $\cH$, and let $\cH^{\eta}=\{ H^{\eta}_s\}_{s=1}^{\eta}$.
Then with (\ref{eqn_thm_harden_wo_recal_1}) we have
\begin{align}
\phi(\cH)&=\sum_{s=1}^{\eta} \phi(H_s)  \nonumber \\
& \leq \sum_{s=1}^{\eta} \phi(H^{\eta}_s)   \nonumber \\
& =\phi(\cH^{\eta}),
\end{align}		
where the maximum of $\phi(\cH)$ is attained when $\cH$ contains $\eta$ edges of highest hardening scores. Therefore, $\cH^{\eta}$ is a maximizer of $\phi(\cH)$.

\section{Proof of Theorem \ref{thm_hardening}}
\label{proof_thm_hardening}
We first show the relation that $\lambda_{\max} (\bJ) \geq \lambda_{\max} \lb \bJt (\cH) \rb$.	For any hardening set $\cH$, let $\byt$ be the largest eigenvector of $\bJt (\cH)$. With (\ref{eqn_P_hardening}) we have 
\begin{align}
\label{eqn_J_UB}
\lambda_{\max} \lb \bJt (\cH) \rb &= \byt^T \bJt \lb \cH \rb \byt  \nonumber \\
&=\byt^T \bJ  \byt - \byt^T \Delta \bJ_{\cH} \byt  \nonumber \\
& \leq \lambda_{\max} \lb \bJ  \rb - \byt^T \Delta \bJ_{\cH} \byt  \nonumber \\
& \leq \lambda_{\max} \lb \bJ  \rb.
\end{align}
The fact that $\byt^T \bJ  \byt  \leq \lambda_{\max} \lb \bJ  \rb$ is from the  Courant-Fischer theorem \cite{HornMatrixAnalysis}, and the last inequality uses the fact that $\byt^T \Delta \bJ_{\cH} \byt \geq 0$ from the  Perron-Frobenious theorem \cite{HornMatrixAnalysis} due to the fact that all entries in $\Delta \bJ_{\cH}$ and $\byt$ are nonnegative.

If $\lambda_{\max}(\bJ)>0$, then by 	(\ref{eqn_J_LB}) and (\ref{eqn_J_UB})
we have	$\phi(\cH^{opt})>0$. Otherwise $\phi(\cH^{opt})=0$ implies that $\by$ is a zero vector, which contradicts the fact that $\by$ 
is the largest eigenvector of $\bJ$. Therefore, if $\lambda_{\max} (\bJ)>0$ we have $\lambda_{\max} (\bJ) > \lambda_{\max} \lb \bJt (\cH^{opt}) \rb$.
When $|\cH|=\eta$, since $\cH^{opt}$ is the minimizer of  $\lambda_{\max} \lb \bJt (\cH) \rb $ and
$\cH^{\eta}$ is the maximizer of $\phi(\cH)$, we have
\begin{align}
\label{eqn_thm_hardening_1}
\lambda_{\max} \lb \bJt (\cH^\eta) \rb &\geq \lambda_{\max} \lb \bJt (\cH^{opt}) \rb  \nonumber \\
&\geq \lambda_{\max}(\bJ) -  \phi(\cH^{opt})  \nonumber \\
&\geq \lambda_{\max}(\bJ) -  \phi(\cH^{\eta}).
\end{align}
By the facts that $\lambda_{\max} (\bJ) > \lambda_{\max} \lb \bJt (\cH^{opt}) \rb$ and $\lambda_{\max} (\bJ) \geq \lambda_{\max} \lb \bJt (\cH^{\eta}) \rb$, if $\phi(\cH^{\eta})>0$, with (\ref{eqn_thm_hardening_1}) there exists some constant $c'' > 0 $ such that
\begin{align}
&\lambda_{\max}(\bJ)- c''  \cdot \phi(\cH^\eta)  \geq \lambda_{\max} \lb \bJt(\cH^{opt}) \rb;  \nonumber \\
&\lambda_{\max} \lb \bJt(\cH^{opt}) \rb \geq  \lambda_{\max}(\bJ)- \phi(\cH^\eta). 
\end{align}

\section{Proof of Corollary \ref{cor_hardening}}
\label{proof_cor_hardening}
This corollary is a direct result of Theorem \ref{thm_hardening} by replacing $\bJ$ with $\bJt(\cH)$ and setting $\eta=1$.

%


\section{Descriptions on the benchmark dataset}
\label{proof_bench}
This benchmark dataset was collected from the network traffic of a cyber testbed running inside a OpenStack-based cloud with nearly 2000 virtual machine instances.  Starting from a known machine (host), the attack involved logging from one machine to another using SSH.  Implemented by automated scripts, on each machine the attack replicated to four other machines at the beginning of every hour.  This process continued for 8 hours.  We collected network traffic flows from each virtual machine and combined to produce a 16 GB packet capture dataset.  Each packet information was further aggregated to produce ``flow" level information, which can be interpreted as a ``communication session" between two machines.  As an example, when a client connects to the server, the client may send 5 packets and receive 10 packets of data from the server.  The ``flow" level data will combine these 15 data packets into a single ``flow" to represent one interaction between the machines.   Each flow record has the following elements: IP address and port information for both source and destination devices, protocol, flow start time, duration and message size.  We infer the application by considering the protocol and destination address pair.  As an example, a flow to destination port 22 over TCP protocol implies an SSH connection.  To apply our proposed method to the cyber system against lateral movement attacks, we select the source and destination IP address, and the applications to build the host-application graph.

\end{document}